\newtheorem{theorem}{Theorem}[section]
\newtheorem{corollary}[theorem]{Corollary}
\newtheorem{lemma}[theorem]{Lemma}
\theoremstyle{definition}
\newtheorem{definition}[theorem]{Definition}
\theoremstyle{remark}
\newtheorem*{remark*}{Remark}
\numberwithin{equation}{section}
\newcommand{\treeo}{\circle*{2}}
\newcommand{\treei}[2]{\begin{picture}(0,0)
\put(0,0){\circle*{2}}
\put(0,0){\line(-1,-1){10}}
\put(0,0){\line(1,-1){10}}
\put(-10,-10){#1}
\put(10,-10){#2}
\end{picture}}
\newcommand{\treeii}[2]{\begin{picture}(0,0)
\put(0,0){\circle*{2}}
\put(0,0){\line(-1,-2){5}}
\put(0,0){\line(1,-2){5}}
\put(-5,-10){#1}
\put(5,-10){#2}
\end{picture}}
\newcommand{\treeiii}[2]{\begin{picture}(0,0)
\put(0,0){\circle*{2}}
\put(0,0){\line(-1,-4){2.5}}
\put(0,0){\line(1,-4){2.5}}
\put(-2.5,-10){#1}
\put(2.5,-10){#2}
\end{picture}}
\author{Matthew P. Szudzik}
\title{The Rosenberg-Strong Pairing Function}
\date{2019-01-28}
\begin{document}

\maketitle

\begin{abstract}
This article surveys the known results (and not very well-known results) associated with Cantor's pairing function and the Rosenberg-Strong pairing function, including their inverses, their generalizations to higher dimensions, and a discussion of a few of the advantages of the Rosenberg-Strong pairing function over Cantor's pairing function in practical applications.  In particular, an application to the problem of enumerating full binary trees is discussed.
\end{abstract}

\section{Cantor's pairing function}

Given any set $B$, a \emph{pairing function}\footnote{
There is no general agreement on the definition of a pairing function in the published literature.  For a given set $B$, some publications~\citep{Barendregt1974,Rosenberg1975b,Simpson1999} use a more general definition, and allow any \emph{one-to-one} function (\textit{i.e.}, any injection) from $B^2$ to $B$ to be regarded as a pairing function.  Other publications~\citep{Regan1992,Rosenberg2003} are more restrictive, and require each pairing function to be a \emph{one-to-one correspondence} (\textit{i.e.}, a bijection).  We use the more restrictive definition in this paper.
%
% Rogers1967 requires that pairing functions be one-to-one
% correspondences, but, in addition, he requires all pairing functions
% to be recursive.  The pairing functions discussed in this paper all
% satisfy Rogers' definition.  Nevertheless, we prefer to use the term
% ``recursive pairing function'' to describe pairing functions in
% Rogers' sense.
% 
% Machtey1978 has an even more restrictive definition.  He requires that
% pairing functions be primitive recursive, strictly monotone,
% one-to-one correspondences.  But the Rosenberg-Strong pairing function
% does not satisfy Machtey's definition because the Rosenberg-Strong
% pairing function isn't monotone.
%
} for $B$ is a one-to-one correspondence from the set of ordered pairs $B^2$ to the set $B$.  The only finite sets $B$ with pairing functions are the sets with fewer than two elements.  But if $B$ is infinite, then a pairing function for $B$ necessarily exists.\footnote{
The proof~\citep{Zermelo1904} for this claim relies on the axiom of choice.  In fact, in Zermelo set theory, asserting that all infinite sets have pairing functions is equivalent to asserting the axiom of choice~\citep{Tarski1924}.
}  For example, \emph{Cantor's pairing function}~\citep{Cantor1878} for the positive integers is the function
\begin{equation*}
p(x,y)=\frac{1}{2}(x^2+2xy+y^2-x-3y+2)
\end{equation*}
that maps each pair $(x,y)$ of positive integers to a single positive integer $p(x,y)$.  Cantor's pairing function serves as an important example in elementary set theory~\citep{Enderton1977}.  It is also used as a fundamental tool in recursion theory and other related areas of mathematics~\citep{Rogers1967,Matiyasevich1993}.

A few different variants of Cantor's pairing function appear in the literature.  First, given any pairing function $f(x,y)$ for the positive integers, the function $f(x+1,y+1)-1$ is a pairing function for the non-negative integers.  Therefore, we refer to the function
\begin{equation*}
c(x,y)=p(x+1,y+1)-1=\frac{1}{2}(x^2+2xy+y^2+3x+y)
\end{equation*}
as \emph{Cantor's pairing function} for the non-negative integers.  And given any pairing function $f(x,y)$ for a set $B$, the function obtained by exchanging $x$ and $y$ in the definition of $f$ is itself a pairing function for $B$.  Hence,
\begin{equation*}
\bar{c}(x,y)=\frac{1}{2}(y^2+2yx+x^2+3y+x)
\end{equation*}
is another variant of Cantor's pairing function for the non-negative integers.  It has been shown by Fueter and P\'{o}lya~\citeyearpar{Fueter1923} that there are only two quadratic polynomials that are pairing functions for the non-negative integers, namely the polynomials $c(x,y)$ and $\bar{c}(x,y)$.  But it is a longstanding open problem whether there exist any other polynomials, of higher degree, that are pairing functions for the non-negative integers.  Partial results toward a resolution of this problem have been obtained by Lew and Rosenberg~\citeyearpar{Lew1978a,Lew1978b}.

An \emph{enumeration} of a countably infinite set $C$ is a one-to-one correspondence from the set $\mathbb{N}$ of non-negative integers to the set $C$.  We think of an enumeration $g\colon\mathbb{N}\to C$ as \emph{ordering} the members of $C$ in the sequence
\begin{equation*}
g(0),\;\;g(1),\;\;g(2),\;\;g(3),\;\;\ldots.
\end{equation*}
Given any pairing function $f\colon\mathbb{N}^2\to\mathbb{N}$, its inverse $f^{-1}\colon\mathbb{N}\to\mathbb{N}^2$ is an enumeration of the set $\mathbb{N}^2$.  For example, the inverse of Cantor's pairing function $c(x,y)$ orders the points in $\mathbb{N}^2$ according to the sequence
\begin{equation}\label{c-seq}
(0,0),\;\;(0,1),\;\;(1,0),\;\;(0,2),\;\;(1,1),\;\;(2,0),\;\;\ldots.
\end{equation}
This sequence is illustrated in Figure~\ref{c-illust}.
\begin{figure}
\begin{center}
\begin{picture}(167,171)(-29,-32)
%
% Exterior frame:
%\put(-29,-32){\framebox(167,171){}}
%
% Lattice points:
\multiput(0,0)(0,36){4}{\multiput(0,0)(36,0){4}{\circle*{4}}}
%
% Coordinate axes:
\thinlines
\put(-18,-18){\line(1,0){144}}
\put(-18,-18){\line(0,1){144}}
%
% Axes labels:
\put(132,-18){\makebox(0,0)[l]{$x$}}
\put(-18,132){\makebox(0,0)[b]{$y$}}
%
% Tick marks:
\thinlines
\multiput(0,-18)(36,0){4}{\line(0,1){3}}
\multiput(-18,0)(0,36){4}{\line(1,0){3}}
%
% X-positions:
\put(0,-25){\makebox(0,0)[t]{0}}
\put(36,-25){\makebox(0,0)[t]{1}}
\put(72,-25){\makebox(0,0)[t]{2}}
\put(108,-25){\makebox(0,0)[t]{3}}
%
% Y-positions:
\put(-24,0){\makebox(0,0)[r]{0}}
\put(-24,36){\makebox(0,0)[r]{1}}
\put(-24,72){\makebox(0,0)[r]{2}}
\put(-24,108){\makebox(0,0)[r]{3}}
%
% Shells:
\thicklines
\multiput(5,31)(36,-36){1}{\vector(1,-1){26}}
\multiput(5,67)(36,-36){2}{\vector(1,-1){26}}
\multiput(5,103)(36,-36){3}{\vector(1,-1){26}}
%
% c(x,y):
\put(3,4){\makebox(0,0)[bl]{0}}
\put(3,40){\makebox(0,0)[bl]{1}}
\put(39,4){\makebox(0,0)[bl]{2}}
\put(3,76){\makebox(0,0)[bl]{3}}
\put(39,40){\makebox(0,0)[bl]{4}}
\put(75,4){\makebox(0,0)[bl]{5}}
\put(3,112){\makebox(0,0)[bl]{6}}
\put(39,76){\makebox(0,0)[bl]{7}}
\put(75,40){\makebox(0,0)[bl]{8}}
\put(111,4){\makebox(0,0)[bl]{9}}
\end{picture}
\caption{Cantor's pairing function $c(x,y)$.}
\label{c-illust}
\end{center}
\end{figure}
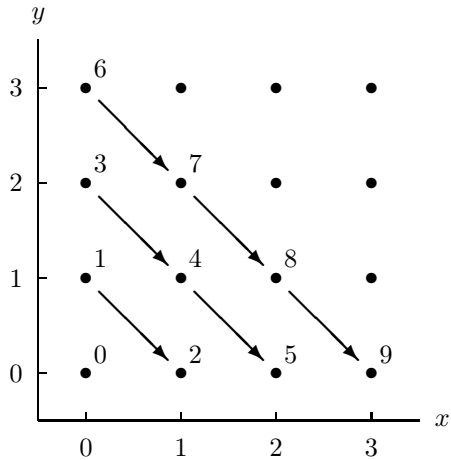
Cantor's pairing function is closely related to Cauchy's product formula~\citep{Cauchy1821}, which defines the \emph{product} of two infinite series, $\sum_{i=0}^\infty a_i$ and $\sum_{i=0}^\infty b_i$, to be the infinite series
\begin{equation*}
\sum_{i=0}^\infty\sum_{j=0}^i a_jb_{i-j}=a_0b_0+(a_0b_1+a_1b_0)+(a_0b_2+a_1b_1+a_2b_0)+\cdots.
\end{equation*}
In particular, the pairs of subscripts in this sum occur in exactly the same order as the points in sequence~\eqref{c-seq}.

Given any point $(x,y)$ in $\mathbb{N}^2$, we say that the quantity $x+y$ is the point's \emph{shell number} for Cantor's pairing function.  In Figure~\ref{c-illust}, any two consecutive points that share the same shell number have been joined with an arrow.  The inverse of Cantor's pairing function $c(x,y)$ is given by the formula
\begin{equation}\label{c-inv}
c^{-1}(z)=\left(z-\frac{w(w+1)}{2},\frac{w(w+3)}{2}-z\right),
\end{equation}
where
\begin{equation*}
w=\left\lfloor\frac{-1+\sqrt{1+8z}}{2}\right\rfloor,
\end{equation*}
and where, for all real numbers $t$, $\lfloor t\rfloor$ denotes the floor of $t$.  In his derivation of this inverse formula, Davis~\citeyearpar{Davis1958} has shown that $w$ is the shell number of $c^{-1}(z)$.  In fact, we can deduce this directly from equation~\eqref{c-inv}, since the two components of $c^{-1}(z)$ sum to $w$.
%
% The variable $w$ is conventionally used for the shell number of a
% $d$-tupling function with diagonal shells.  See Chowla1961,
% Morales1996, and Morales1999.
%

In elementary set theory~\citep{Enderton1977}, an ordered triple of elements $(x_1,x_2,x_3)$ is defined as an abbreviation for the formula $\bigl((x_1,x_2),x_3\bigr)$.  Likewise, an ordered quadruple $(x_1,x_2,x_3,x_4)$ is defined as an abbreviation for $\bigl(\bigl((x_1,x_2),x_3\bigr),x_4\bigr)$, and so on.  A similar idea can be used to generalize pairing functions to higher dimensions.  Given any set $B$ and any positive integer $d$, we say that a one-to-one correspondence from $B^d$ to $B$ is a \emph{$d$-tupling function} for $B$.  For example, if $f$ is any pairing function for a set $B$, then
\begin{equation}\label{tripling}
g(x_1,x_2,x_3)=f\bigl(f(x_1,x_2),x_3\bigr)
\end{equation}
is a $3$-tupling function for $B$.  The inverse of this $3$-tupling function is
\begin{equation*}
g^{-1}(z)=\bigl(f^{-1}(u),v\bigr),
\end{equation*}
where $u$ and $v$ are defined so that $(u,v)=f^{-1}(z)$.

Cantor's pairing function $c(x_1,x_2)$ is a quadratic polynomial pairing function.  By equation~\eqref{tripling}, the $4$th degree polynomial $c\bigl(c(x_1,x_2),x_3\bigr)$ is a $3$-tupling function.  And similarly, the $8$th degree polynomial $c\bigl(c\bigl(c(x_1,x_2),x_3\bigr),x_4\bigr)$ is a $4$-tupling function.  By repeatedly applying Cantor's pairing function in this manner, one can obtain a $(2^{d-1})$-degree polynomial $d$-tupling function, given any positive integer $d$.  This method for generalizing Cantor's pairing function to higher dimensions is commonly used in recursion theory~\citep{Rogers1967}.

Another higher-dimensional generalization of Cantor's pairing function was identified by Skolem~\citeyearpar{Skolem1937}.  In particular, for each positive integer $d$, the $d$-degree polynomial
\begin{equation*}
s_d(x_1,x_2,\ldots,x_d)=\sum_{i=1}^d\binom{x_1+x_2+\cdots+x_i+i-1}{i}
\end{equation*}
is a $d$-tupling function for the non-negative integers.  Note that $s_2(x_1,x_2)$ is Cantor's pairing function for the non-negative integers.  The inverse function $s_d^{-1}(z)$ can be calculated using the $d$-canonical
representation\footnote{
An algorithm for calculating the $d$-canonical representation is described by Kruskal~\citeyearpar{Kruskal1963}.  Several different names for this representation appear in the published literature, including the \emph{combinatorial number system}~\citep{Knuth1997}, the \emph{$d$th Macaulay representation}~\citep{Green1989}, the \emph{$d$-binomial expansion}~\citep{Greene1978}, and the \emph{$d$-cascade representation}~\citep{Frankl1984}.  The earliest known description of the $d$-canonical representation is in a paper by Ernesto Pascal~\citeyearpar{Pascal1887}.
} of $z$.

Lew~\citeyearpar[pp. 265--266]{Lew1979} has shown that every polynomial $d$-tupling function for the non-negative integers has degree $d$ or higher.  That is, Skolem's $d$-tupling function has the smallest possible degree for a polynomial $d$-tupling function from $\mathbb{N}^d$ to $\mathbb{N}$.  But $s_d(x_1,x_2,\ldots,x_d)$ is not necessarily the only polynomial $d$-tupling function with degree $d$.
%
% But $s_1(x_1)$ \emph{is} the only polynomial $1$-tupling function for
% the non-negative integers that has degree $1$.
%

Let $A_d$ denote the set of all $d$-tupling functions for the non-negative integers that can be expressed as polynomials of degree $d$.  Given any $d$-tupling function for a set $B$, a function obtained by permuting the order of its arguments is also a $d$-tupling function for $B$.
%
% Because a permutation of $(x_1,x_2,\ldots,x_d)$ is a one-to-one
% correspondence from $B^d$ to $B^d$, and any composition of one-to-one
% correspondences is itself a one-to-one correspondence.
%
If this permutation is not the identity, and if $B$ contains at least two elements,
%
% Of course, if the permutation is not the identity and if $B$ has at
% least two elements, then $B$ must be infinite.  This is because $d>1$
% if the permutation is not the identity.  And if $d>1$, then there are
% no $d$-tupling functions for finite sets $B$ with two or more
% elements.
%
then the $d$-tupling function obtained in this manner is necessarily distinct from the original function.
%
% Proof.  Let $f$ be a $d$-tupling function for $B$ and let $\sigma$ be
% some non-identity permutation of its arguments.  Let $a$ and $b$ be
% distinct elements in $B$, and suppose that the argument $x_i$ is moved
% by the permutation.  Now consider the input where $x_i=a$, and where
% all other arguments are equal to $b$.  We call this input $y$.  Since
% $y$ and $\sigma(y)$ are distinct members of $B^d$, $f(y)\ne
% f\bigl(\sigma(y)\bigr)$.  Hence, $f\ne(f\circ\sigma)$.  Q.E.D.
%
Therefore, the set $A_d$ contains at least $d\,!$ many functions---one for each permutation of the arguments of $s_d(x_1,x_2,\ldots,x_d)$.  If $d=1$ or $d=2$, then these are the only members of $A_d$, and $\lvert A_d\rvert=d\,!$ in this case.  But Chowla~\citeyearpar{Chowla1961} has shown that for all positive integers $d$,
\begin{multline*}
\chi_d(x_1,x_2,\ldots,x_d)=\\
\binom{x_1+\cdots+x_d+d}{d}-1-\sum_{i=1}^{d-1}\binom{x_{i+1}+\cdots+x_d+d-i-1}{d-i}
\end{multline*}
is also a $d$-degree polynomial $d$-tupling function for the non-negative integers.\footnote{
The function originally described by Chowla was a $d$-tupling function for the positive integers.  The variant given here is obtained by translating Chowla's function from the positive integers to the non-negative integers.
}  Chowla's function is identical to Skolem's function  for $d=1$ and $d=2$.  When $d$ is greater than $2$, the following theorem holds.

\begin{theorem}
Let $d$ be any integer greater than $2$.  Then, $\chi_d(x_1,x_2,\ldots,x_d)$ cannot be obtained by permuting the arguments of $s_d(x_1,x_2,\ldots,x_d)$.
\end{theorem}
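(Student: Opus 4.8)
The plan is to compare the two polynomials at carefully chosen points and extract a structural invariant that is preserved under permutation of arguments but fails for $\chi_d$. The cleanest invariant to exploit is the behavior of the leading (degree $d$) homogeneous part. Both $s_d$ and $\chi_d$ have degree $d$, and one computes directly that the degree-$d$ part of $s_d(x_1,\dots,x_d)$ is $\frac{1}{d!}(x_1+x_2+\cdots+x_d)^d$, since the top term of $\binom{x_1+\cdots+x_i+i-1}{i}$ has degree $i<d$ for $i<d$ and degree exactly $d$ only for $i=d$, contributing $\frac{1}{d!}(x_1+\cdots+x_d)^d$. This homogeneous part is symmetric in all variables, so it is fixed by every permutation of arguments. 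Hence if $\chi_d$ were obtainable by permuting the arguments of $s_d$, its degree-$d$ part would also have to equal $\frac{1}{d!}(x_1+\cdots+x_d)^d$.

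So the first step is to compute the degree-$d$ homogeneous part of $\chi_d$. In $\chi_d$, the term $\binom{x_1+\cdots+x_d+d}{d}$ contributes $\frac{1}{d!}(x_1+\cdots+x_d)^d$ at top degree, and each subtracted term $\binom{x_{i+1}+\cdots+x_d+d-i-1}{d-i}$ has degree $d-i<d$, hence contributes nothing at degree $d$. Therefore the degree-$d$ part of $\chi_d$ is also $\frac{1}{d!}(x_1+\cdots+x_d)^d$ — so the leading form alone does not distinguish them, and I must descend to lower-degree terms.

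The natural next step is to look at the degree-$(d-1)$ homogeneous part. In $s_d$, the degree-$(d-1)$ contributions come from (a) the lower-order part of $\binom{x_1+\cdots+x_d+d-1}{d}$ and (b) the top part of $\binom{x_1+\cdots+x_{d-1}+d-2}{d-1}$; both of these are symmetric under all permutations that fix the partial-sum structure, but more to the point, the full degree-$(d-1)$ part of $s_d$ is a symmetric function only in $x_1,\dots,x_{d-1}$ jointly with a distinguished role for where the "cut" falls. In $\chi_d$, by contrast, the degree-$(d-1)$ part picks up, from the $i=1$ subtracted term $\binom{x_2+\cdots+x_d+d-2}{d-1}$, a contribution $-\frac{1}{(d-1)!}(x_2+\cdots+x_d)^{d-1}$, which is symmetric in $x_2,\dots,x_d$ but singles out the variable $x_1$ by its absence. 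The key claim is then: the degree-$(d-1)$ part of $s_d$, under any permutation of its arguments, can never produce a term of the form $-\frac{1}{(d-1)!}(x_{\sigma(2)}+\cdots+x_{\sigma(d)})^{d-1}$ matching what $\chi_d$ requires — because the degree-$(d-1)$ part of $s_d$ has a different "shape," namely it involves the partial sum $x_1+\cdots+x_{d-1}$ (omitting the last variable, not the first, and with a specific coefficient), and one checks the two cannot be made equal for $d>2$ by relabeling. I would make this precise by evaluating both at a convenient family of points, e.g. setting all but two variables to zero, or by directly reading off the coefficient of a monomial such as $x_d^{\,d-1}$ versus $x_1^{\,d-1}$ in each candidate permutation.

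The main obstacle, and the part requiring real care rather than routine computation, is the last step: ruling out \emph{all} $d!$ permutations at once. The efficient route is to identify a permutation-invariant functional of a degree-$d$ polynomial $P$ — for instance, the polynomial $Q(t) = P(t,0,\dots,0) + P(0,t,0,\dots,0) + \cdots$ (sum over putting $t$ in each slot), or better, to track for each variable $x_k$ the univariate polynomial obtained by setting all \emph{other} variables to $0$ — and show this invariant takes one value for $s_d$ (the same for every argument, by the near-symmetry of Skolem's function in the relevant sense) but takes two distinct values among the arguments of $\chi_d$ (the variable $x_1$ behaving differently from $x_2,\dots,x_d$, which is visible already from the $i=1$ subtracted binomial term once $d>2$). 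Verifying that $s_d(t,0,\dots,0)=s_d(0,\dots,0,t)=\binom{t+d-1}{d}$ and that this is indeed independent of the slot — while $\chi_d(0,t,0,\dots,0)\neq\chi_d(t,0,\dots,0)$ for $d>2$ — then finishes the argument, since a permutation of $s_d$'s arguments would have to send a slot-independent profile to a slot-dependent one, which is impossible.
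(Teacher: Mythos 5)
The final step of your proposal rests on the claim that the restriction of $s_d$ to each coordinate axis is one and the same univariate polynomial, namely $s_d(t,0,\ldots,0)=s_d(0,\ldots,0,t)=\binom{t+d-1}{d}$. This is false. Setting every variable except the one in slot $k$ to zero kills only the binomials with $i<k$, so the slot-$k$ profile of $s_d$ is $\sum_{i=k}^{d}\binom{t+i-1}{i}$, which genuinely depends on $k$: already $s_2(1,0)=2\ne1=s_2(0,1)$, and $s_3(1,0,0)=3$ while $s_3(0,0,1)=1$. Worse for your dichotomy, by the hockey-stick identity the slot-$1$ profile of $s_d$ equals $\binom{t+d}{d}-1$, which is exactly the slot-$1$ profile of $\chi_d$; so the situation is the opposite of ``slot-independent for $s_d$ versus slot-dependent for $\chi_d$,'' and the concluding impossibility argument (``a permutation would have to send a slot-independent profile to a slot-dependent one'') collapses. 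The earlier degree-$(d-1)$ discussion is left as an unfinished sketch (``I would make this precise by\ldots''), so nothing else in the proposal closes this gap.

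For comparison, the paper's proof avoids all polynomial-structure analysis: since $s_d$ is a bijection and $s_d(0,1,0,\ldots,0)=\chi_d(0,1,0,\ldots,0)$ (both equal $d-1$), any permutation of arguments realizing $\chi_d$ would have to leave $x_2$ in place; but $s_d(0,2,0,\ldots,0)=\binom{d+2}{2}-3$ whereas $\chi_d(0,2,0,\ldots,0)=\binom{d+2}{2}-1-d$, and these differ precisely when $d>2$, a contradiction. Your axis-profile idea can in fact be repaired: the profiles of $s_d$ are pairwise distinct, and matching them against those of $\chi_d$ (for instance at $t=1$, where both lists of values read $d,d-1,\ldots,1$ in slot order) forces the candidate permutation to be the identity, after which the slot-$2$ comparison at $t=2$ yields the same contradiction as the paper. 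But that repair amounts to redoing the computation that went wrong, and it is substantially heavier than the two point evaluations the paper uses.
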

\begin{proof}[Proof by contradiction]
Assume that $\chi_d(x_1,x_2,\ldots,x_d)$ can be obtained by permuting the arguments of $s_d(x_1,x_2,\ldots,x_d)$.  Then, since $s_d\colon\mathbb{N}^d\to\mathbb{N}$ is a one-to-one correspondence and
\begin{equation*}
s_d(0,1,0,0,\ldots,0)=\chi_d(0,1,0,0,\ldots,0),
\end{equation*}
the permutation must not move the argument $x_2$.  But this contradicts the fact that
\begin{equation*}
s_d(0,2,0,0,\ldots,0)\ne\chi_d(0,2,0,0,\ldots,0).
\end{equation*}
Therefore, the assumption is false, and $\chi_d(x_1,x_2,\ldots,x_d)$ cannot be obtained by permuting the arguments of $s_d(x_1,x_2,\ldots,x_d)$.
\end{proof}

An immediate consequence is that if $d$ is greater than $2$, then $\lvert A_d\rvert>d\,!$.  A more precise lower bound for $\lvert A_d\rvert$ has been provided by
%
% Tie omitted to avoid a potentially overfull \hbox
%
Morales and Arredondo \citeyearpar{Morales1999}.  In particular, they prove that
\begin{equation}\label{A-card}
\lvert A_d\rvert\geq d\,!\,a(d),
\end{equation}
where $a(d)$ is defined recursively so that $a(1)=1$, and so that
\begin{equation*}
a(d)=\sum_{\substack{i\in\mathbb{N},\;i\ne1\\i\;\text{divides}\;d}}(i-1)!\,a(d/i)
\end{equation*}
for all integers $d>1$.  They also construct a set of polynomial $d$-tupling functions for each positive integer $d$.  Morales and Arredondo conjecture that this is the set of all polynomial $d$-tupling functions for the non-negative integers.  If their conjecture is correct, then the inequality~\eqref{A-card} is an equality.  A proof of the Morales-Arredondo conjecture would also imply that $c(x,y)$ and $\bar{c}(x,y)$ are the only polynomial pairing functions for the non-negative integers.
%
% Because page 203 of Morales1999 asserts that $e_2=1$.
%

\section{Other pairing functions}

In addition to Cantor's pairing function, a few other pairing functions are often encountered in the literature.  Most notably, the \emph{Rosenberg-Strong pairing function}~\citep{Rosenberg1972,Rosenberg1974} for the non-negative integers is defined by the formula\footnote{
The function originally described by Rosenberg and Strong was a pairing function for the positive integers.  The variant defined here is obtained by translating the Rosenberg-Strong function from the positive integers to the non-negative integers, and by reversing the order of the arguments.
}
\begin{equation}\label{r2}
r_2(x,y)=\bigl(\max(x,y)\bigr)^2+\max(x,y)+x-y.
\end{equation}
In the context of the Rosenberg-Strong pairing function, the quantity $\max(x,y)$ is said to be the \emph{shell number} of the point $(x,y)$.  Figure~\ref{r-illust} contains an illustration of the Rosenberg-Strong pairing function.
\begin{figure}
\begin{center}
\begin{picture}(167,171)(-29,-32)
%
% Exterior frame:
%\put(-29,-32){\framebox(167,171){}}
%
% Lattice points:
\multiput(0,0)(0,36){4}{\multiput(0,0)(36,0){4}{\circle*{4}}}
%
% Coordinate axes:
\thinlines
\put(-18,-18){\line(1,0){144}}
\put(-18,-18){\line(0,1){144}}
%
% Axes labels:
\put(132,-18){\makebox(0,0)[l]{$x$}}
\put(-18,132){\makebox(0,0)[b]{$y$}}
%
% Tick marks:
\thinlines
\multiput(0,-18)(36,0){4}{\line(0,1){3}}
\multiput(-18,0)(0,36){4}{\line(1,0){3}}
%
% X-positions:
\put(0,-25){\makebox(0,0)[t]{0}}
\put(36,-25){\makebox(0,0)[t]{1}}
\put(72,-25){\makebox(0,0)[t]{2}}
\put(108,-25){\makebox(0,0)[t]{3}}
%
% Y-positions:
\put(-24,0){\makebox(0,0)[r]{0}}
\put(-24,36){\makebox(0,0)[r]{1}}
\put(-24,72){\makebox(0,0)[r]{2}}
\put(-24,108){\makebox(0,0)[r]{3}}
%
% Horizontal shell segments:
\thicklines
\multiput(6,36)(36,0){1}{\vector(1,0){24}}
\multiput(6,72)(36,0){2}{\vector(1,0){24}}
\multiput(6,108)(36,0){3}{\vector(1,0){24}}
%
% Vertical shell segments:
\thicklines
\multiput(36,30)(0,36){1}{\vector(0,-1){24}}
\multiput(72,30)(0,36){2}{\vector(0,-1){24}}
\multiput(108,30)(0,36){3}{\vector(0,-1){24}}
%
% r_2(x,y):
\put(5,5){\makebox(0,0)[bl]{0}}
\put(5,41){\makebox(0,0)[bl]{1}}
\put(41,41){\makebox(0,0)[bl]{2}}
\put(41,5){\makebox(0,0)[bl]{3}}
\put(5,77){\makebox(0,0)[bl]{4}}
\put(41,77){\makebox(0,0)[bl]{5}}
\put(77,77){\makebox(0,0)[bl]{6}}
\put(77,41){\makebox(0,0)[bl]{7}}
\put(77,5){\makebox(0,0)[bl]{8}}
\put(5,113){\makebox(0,0)[bl]{9}}
\put(41,113){\makebox(0,0)[bl]{10}}
\put(77,113){\makebox(0,0)[bl]{11}}
\put(113,113){\makebox(0,0)[bl]{12}}
\put(113,77){\makebox(0,0)[bl]{13}}
\put(113,41){\makebox(0,0)[bl]{14}}
\put(113,5){\makebox(0,0)[bl]{15}}
\end{picture}
\caption{The Rosenberg-Strong pairing function $r_2(x,y)$.}
\label{r-illust}
\end{center}
\end{figure}
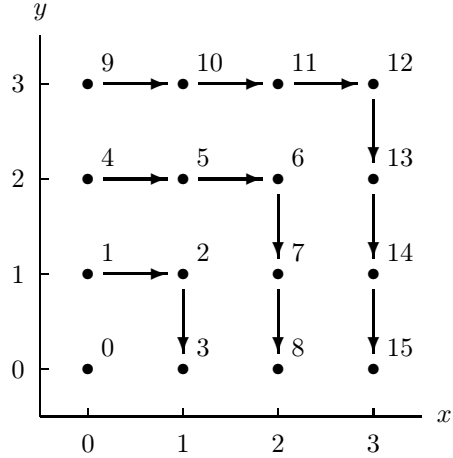
In this illustration, points that appear consecutively in the enumeration $r_2^{-1}\colon\mathbb{N}\to\mathbb{N}^2$ are joined by an arrow if and only if they share the same shell number.  The inverse of the Rosenberg-Strong pairing function $r_2(x,y)$ is given by the formula
\begin{equation*}
r_2^{-1}(z)=\begin{cases}
\bigl(z-m^2,m\bigr) &\text{if $z-m^2<m$}\\
\bigl(m,m^2+2m-z\bigr) &\text{otherwise}\rule{0pt}{12pt}
\end{cases},
\end{equation*}
where $m=\bigl\lfloor\sqrt{z}\,\bigr\rfloor$.  Note that $m$ is the shell number of the point $r_2^{-1}(z)$.

A shell numbering can serve as a useful tool when one wishes to describe the properties of a $d$-tupling function.  Formally, we make the following definition. 
\begin{definition}\label{shell-numbering}
Let $f\colon\mathbb{N}^d\to\mathbb{N}$ be a $d$-tupling function.  A function $\sigma\colon\mathbb{N}^d\to\mathbb{N}$ is said to be a \emph{shell numbering} for $f$ if and only if
\begin{equation*}
\sigma(\mathbf{x})<\sigma(\mathbf{y})\quad\text{implies}\quad f(\mathbf{x})<f(\mathbf{y})
\end{equation*}
for all $\mathbf{x}$ and $\mathbf{y}$ in $\mathbb{N}^d$.
\end{definition}

Given a shell numbering $\sigma$, the quantity $\sigma(\mathbf{x})$ is said to be the \emph{shell number} of $\mathbf{x}$.  The \emph{shell} that contains $\mathbf{x}$ is the set of all points in $\mathbb{N}^d$ that have the same shell number as $\mathbf{x}$.  Therefore, a shell numbering $\sigma$ partitions $\mathbb{N}^d$ into shells according to the equivalence relation $\sigma(\mathbf{x})=\sigma(\mathbf{y})$.  Notice that each $d$-tupling function for the non-negative integers has more than one shell numbering.  In particular, given any $d$-tupling function $f\colon\mathbb{N}^d\to\mathbb{N}$, the constant functions $\sigma(\mathbf{x})=k$, where $k$ is a non-negative integer, are all shell numberings for $f$.  But there is often one particular shell numbering that we prefer to use in the context of a given $d$-tupling function.  We call this the \emph{standard} shell numbering, or simply \emph{the} shell numbering, for the function.

For Cantor's pairing function $c(x,y)$, the standard shell numbering is
%
% \linebreak[0] is introduced to avoid a potentially overfull \hbox
%
$\sigma(x,y)\linebreak[0]=x+y$, and each shell is a set of points on a diagonal line.  For this reason, $d$-tupling functions with the shell numbering $\delta(x_1,x_2,\ldots,x_d)=x_1+x_2+\cdots+x_d$ are said to have \emph{diagonal shells}.\footnote{
Of course, when $d>2$ the points in each shell lie on a plane or hyperplane, rather than a line.
}  The pairing functions $c(x,y)$ and $\bar{c}(x,y)$ both have diagonal shells.
%
% Another pairing function with diagonal shells appears in
% Kolmogorov1957.
%
The $d$-tupling functions $s_d(x_1,x_2,\ldots,x_d)$ and $\chi_d(x_1,x_2,\ldots,x_d)$ also have diagonal shells.\footnote{
Morales and Lew~\citeyearpar{Morales1996} have devised a notation to describe the members in a certain family of $d$-tupling functions for the non-negative integers.  The functions in this family all have diagonal shells.  Using Morales and Lew's notation,
\begin{equation*}
s_d(x_1,x_2,\ldots,x_d)=AAA\cdots A(x_d,x_{d-1},\ldots,x_1)
\end{equation*}
and
\begin{equation*}
\chi_d(x_1,x_2,\ldots,x_d)=BAA\cdots A(x_d,x_{d-1},\ldots,x_1)
\end{equation*}
for all integers $d>2$.
}
%
% But $c(c(x_1,x_2),c(x_3,x_4))$ is a polynomial $4$-tupling function
% that does not have diagonal shells.  Consider the case where
% $\mathbf{x}=(1,0,0,0)$ and $\mathbf{y}=(0,1,0,1)$.  Notice that
% $\delta(\mathbf{x})<\delta(\mathbf{y})$, but
% $c(c(1,0),c(0,0))=5>4=c(c(0,1),c(0,1))$.
%

A $d$-tupling function with the shell numbering $\max(x_1,x_2,\ldots,x_d)$ is said to have \emph{cubic shells}.  (Although in the $d=2$ case, the term \emph{square shells} is sometimes used, instead.)  The Rosenberg-Strong pairing function $r_2(x,y)$ has cubic shells.  Other pairing functions with cubic shells have been described by P\'{e}ter~\citeyearpar[Sect. 1.27]{Peter1951} and Rosenberg~\citeyearpar[p. 294]{Rosenberg1978}.  In addition to cubic shells and diagonal shells, Rosenberg has also studied $d$-tupling functions with \emph{hyperbolic shells}~\citep{Rosenberg1975b}.

As another example of a popular pairing function, let $q\colon\mathbb{N}^2\to\mathbb{N}$ be defined so that
\begin{equation*}
q(x,y)=2^y(2x+1)-1.
\end{equation*}
Variants of this pairing function have been used by authors in computer science~\citep{Minsky1967,Cutland1980,Davis1983} and set theory~\citep{Sierpinski1912,Hausdorff1927,Hrbacek1978}, presumably because the inverse $q^{-1}(z)$ is easily calculated from the binary representation of $z+1$.  In this context, it is convenient to define the shell number of $(x,y)$ to be the number of bits in the binary representation of $q(x,y)+1$.  Notice that $\bigl\lceil\,\log_2(x+1)\bigr\rceil$ is the number of bits in the binary representation of the non-negative integer $x$, where $\lceil t\rceil$ denotes the ceiling of $t$ for each real number $t$.  Therefore, the shell number of $(x,y)$ is given by the formula
\begin{equation*}
y+1+\bigl\lceil\,\log_2(x+1)\bigr\rceil.
\end{equation*}
The shells of the pairing function $q(x,y)$ are illustrated in 
Figure~\ref{q-illust}.
\begin{figure}
\begin{center}
\begin{picture}(311,171)(-29,-32)
%
% Exterior frame:
%\put(-29,-32){\framebox(311,171){}}
%
% Lattice points:
\multiput(0,0)(0,36){4}{\multiput(0,0)(36,0){8}{\circle*{4}}}
%
% Coordinate axes:
\thinlines
\put(-18,-18){\line(1,0){288}}
\put(-18,-18){\line(0,1){144}}
%
% Axes labels:
\put(276,-18){\makebox(0,0)[l]{$x$}}
\put(-18,132){\makebox(0,0)[b]{$y$}}
%
% Tick marks:
\thinlines
\multiput(0,-18)(36,0){8}{\line(0,1){3}}
\multiput(-18,0)(0,36){4}{\line(1,0){3}}
%
% X-positions:
\put(0,-25){\makebox(0,0)[t]{0}}
\put(36,-25){\makebox(0,0)[t]{1}}
\put(72,-25){\makebox(0,0)[t]{2}}
\put(108,-25){\makebox(0,0)[t]{3}}
\put(144,-25){\makebox(0,0)[t]{4}}
\put(180,-25){\makebox(0,0)[t]{5}}
\put(216,-25){\makebox(0,0)[t]{6}}
\put(252,-25){\makebox(0,0)[t]{7}}
%
% Y-positions:
\put(-24,0){\makebox(0,0)[r]{0}}
\put(-24,36){\makebox(0,0)[r]{1}}
\put(-24,72){\makebox(0,0)[r]{2}}
\put(-24,108){\makebox(0,0)[r]{3}}
%
% Shells:
\multiput(4.293,31.707)(0,36){3}{\multiput(0,0)(2.109,-2.109){14}{\circle*{1}}}
\multiput(40.293,31.707)(0,36){2}{\multiput(0,0)(2.109,-2.109){14}{\circle*{1}}}
\multiput(78.071,0)(0,36){2}{\multiput(0,0)(2.982,0){9}{\circle*{1}}}
\multiput(112.293,31.707)(2.109,-2.109){14}{\circle*{1}}
\multiput(150.071,0)(36,0){3}{\multiput(0,0)(2.982,0){9}{\circle*{1}}}
%
% q(x,y):
\put(4,5){\makebox(0,0)[bl]{0}}
\put(4,41){\makebox(0,0)[bl]{1}}
\put(40,5){\makebox(0,0)[bl]{2}}
\put(4,77){\makebox(0,0)[bl]{3}}
\put(76,5){\makebox(0,0)[bl]{4}}
\put(40,41){\makebox(0,0)[bl]{5}}
\put(112,5){\makebox(0,0)[bl]{6}}
\put(4,113){\makebox(0,0)[bl]{7}}
\put(148,5){\makebox(0,0)[bl]{8}}
\put(76,41){\makebox(0,0)[bl]{9}}
\put(184,5){\makebox(0,0)[bl]{10}}
\put(40,77){\makebox(0,0)[bl]{11}}
\put(220,5){\makebox(0,0)[bl]{12}}
\put(112,41){\makebox(0,0)[bl]{13}}
\put(256,5){\makebox(0,0)[bl]{14}}
\end{picture}
\caption{The pairing function $q(x,y)$.  Points connected by a sequence of dotted line segments have the same shell number.}
\label{q-illust}
\end{center}
\end{figure}
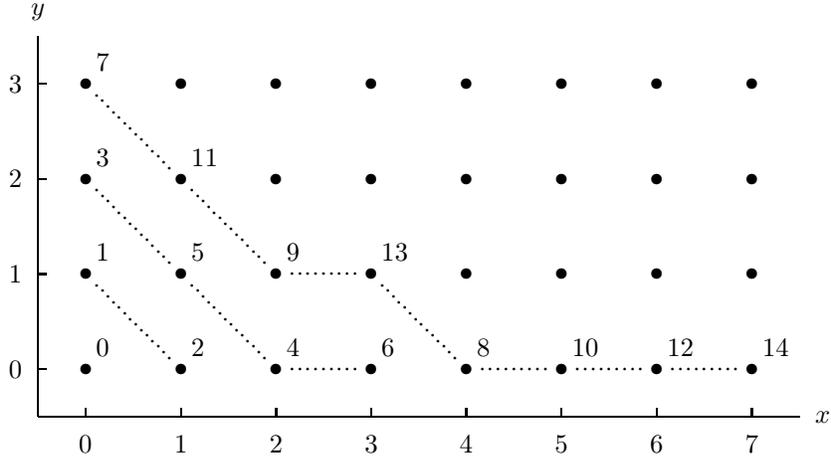

Regan~\citeyearpar{Regan1992} has investigated the computational complexity of pairing functions for the positive integers, and has described several pairing functions with low computational complexity.  Among Regan's results is a pairing function that can be computed in linear time and constant space.

\section{Additional results}\label{additional}

There are several different ways that the shell numberings of a $d$-tupling function can be characterized.  In particular, we have the following lemma.

\begin{lemma}\label{alt-shell-numbering}
Let $f\colon\mathbb{N}^d\to\mathbb{N}$ be any $d$-tupling function, and let $\sigma\colon\mathbb{N}^d\to\mathbb{N}$ be any function.  The following statements are equivalent.
\begin{enumerate}
\item[(a)] $\sigma$ is a shell numbering for $f$.
\item[(b)] For all $i,j\in\mathbb{N}$,
\begin{equation*}
\sigma\bigl(f^{-1}(i)\bigr)<\sigma\bigl(f^{-1}(j)\bigr)\quad\text{implies}\quad i<j.
\end{equation*}
\item[(c)] The sequence
\begin{equation*}
\sigma\bigl(f^{-1}(0)\bigr),\;\;\sigma\bigl(f^{-1}(1)\bigr),\;\;\sigma\bigl(f^{-1}(2)\bigr),\;\;\ldots
\end{equation*}
is non-decreasing.
\end{enumerate}
\end{lemma}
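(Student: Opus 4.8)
The plan is to prove the three statements equivalent by establishing the cycle of implications (a) $\Rightarrow$ (b) $\Rightarrow$ (c) $\Rightarrow$ (a). The only structural facts needed are that $f$ is a bijection—so $f^{-1}$ exists and satisfies $f\bigl(f^{-1}(i)\bigr)=i$ for every $i\in\mathbb{N}$ and $f^{-1}\bigl(f(\mathbf{x})\bigr)=\mathbf{x}$ for every $\mathbf{x}\in\mathbb{N}^d$—and that the usual order on $\mathbb{N}$ is linear (trichotomy). Everything reduces to substituting the right points into Definition~\ref{shell-numbering} and bookkeeping which of $f$ and $f^{-1}$ is applied where.

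For (a) $\Rightarrow$ (b), I would take $i,j\in\mathbb{N}$ with $\sigma\bigl(f^{-1}(i)\bigr)<\sigma\bigl(f^{-1}(j)\bigr)$ and apply the defining property of a shell numbering to $\mathbf{x}=f^{-1}(i)$ and $\mathbf{y}=f^{-1}(j)$, obtaining $f\bigl(f^{-1}(i)\bigr)<f\bigl(f^{-1}(j)\bigr)$, i.e. $i<j$. For (b) $\Rightarrow$ (c), the key observation is that, by trichotomy, the sequence $\sigma\bigl(f^{-1}(0)\bigr),\sigma\bigl(f^{-1}(1)\bigr),\ldots$ fails to be non-decreasing precisely when there are indices $i\ge j$ with $\sigma\bigl(f^{-1}(i)\bigr)<\sigma\bigl(f^{-1}(j)\bigr)$; but (b) applied to such $i,j$ would give $i<j$, a contradiction, so the sequence must be non-decreasing. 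For (c) $\Rightarrow$ (a), I would start from $\sigma(\mathbf{x})<\sigma(\mathbf{y})$, set $i=f(\mathbf{x})$ and $j=f(\mathbf{y})$ so that $\mathbf{x}=f^{-1}(i)$ and $\mathbf{y}=f^{-1}(j)$, hence $\sigma\bigl(f^{-1}(i)\bigr)<\sigma\bigl(f^{-1}(j)\bigr)$; were $j\le i$, monotonicity of the sequence would force $\sigma\bigl(f^{-1}(j)\bigr)\le\sigma\bigl(f^{-1}(i)\bigr)$, contradicting the strict inequality, so $i<j$, i.e. $f(\mathbf{x})<f(\mathbf{y})$.

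No step is genuinely hard; the closest thing to an obstacle is phrasing the "non-decreasing" condition in the form actually used—namely that $j\le i$ implies $\sigma\bigl(f^{-1}(j)\bigr)\le\sigma\bigl(f^{-1}(i)\bigr)$, which relies on the linearity of the order on $\mathbb{N}$—and keeping the direction of each implication (and each application of $f$ versus $f^{-1}$) straight. I expect the final write-up to be only a few lines per implication.
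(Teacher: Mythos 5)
Your proposal is correct and uses essentially the same ideas as the paper's proof: the substitutions $\mathbf{x}=f^{-1}(i)$, $\mathbf{y}=f^{-1}(j)$ (and conversely $i=f(\mathbf{x})$, $j=f(\mathbf{y})$) together with the contrapositive reformulation of ``non-decreasing.'' Organizing it as a cycle (a) $\Rightarrow$ (b) $\Rightarrow$ (c) $\Rightarrow$ (a) rather than the paper's two chained equivalences is only a cosmetic difference.
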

\begin{proof}
By Definition~\ref{shell-numbering}, $\sigma$ is a shell numbering for $f$ if and only if
\begin{equation*}
\sigma(\mathbf{x})<\sigma(\mathbf{y})\quad\text{implies}\quad f(\mathbf{x})<f(\mathbf{y})
\end{equation*}
for all $\mathbf{x},\mathbf{y}\in\mathbb{N}^d$.  But $f^{-1}$ is a one-to-one correspondence from $\mathbb{N}$ to $\mathbb{N}^d$.  Hence, letting $\mathbf{x}=f^{-1}(i)$ and $\mathbf{y}=f^{-1}(j)$, $\sigma$ is a shell numbering for $f$ if and only if 
\begin{equation*}
\sigma\bigl(f^{-1}(i)\bigr)<\sigma\bigl(f^{-1}(j)\bigr)\quad\text{implies}\quad i<j
\end{equation*}
for all $i,j\in\mathbb{N}$.  We have shown that (a) is equivalent to (b).  And taking the contrapositive, (b) is equivalent to the statement that
\begin{equation*}
i\geq j\quad\text{implies}\quad\sigma\bigl(f^{-1}(i)\bigr)\geq\sigma\bigl(f^{-1}(j)\bigr)
\end{equation*}
for all $i,j\in\mathbb{N}$.  From the definition of a non-decreasing function, it immediately follows that (b) is equivalent to (c).
\end{proof}

For each function $\sigma\colon\mathbb{N}^d\to\mathbb{N}$ and each non-negative integer $n$, let $U_\sigma^{<n}$ denote the set of all points $\mathbf{y}\in\mathbb{N}^d$ such that $\sigma(\mathbf{y})<n$.  The following theorem provides another useful way to characterize shell numberings.

\begin{theorem}\label{shell-inequality}
Let $f\colon\mathbb{N}^d\to\mathbb{N}$ be any $d$-tupling function.  A function $\sigma\colon\mathbb{N}^d\to\mathbb{N}$ is a shell numbering for $f$ if and only if, for all points $\mathbf{x}$ in $\mathbb{N}^d$,
\begin{equation}\label{U-ineq}
\bigl\lvert U_\sigma^{<n}\bigr\rvert\leq f(\mathbf{x})<\bigl\lvert U_\sigma^{<n+1}\bigr\rvert,
\end{equation}
where $n=\sigma(\mathbf{x})$.
\end{theorem}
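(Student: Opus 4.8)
My plan is to prove the two implications separately, in both cases working with the sequence $a_k=\sigma\bigl(f^{-1}(k)\bigr)$, $k\in\mathbb{N}$, which by Lemma~\ref{alt-shell-numbering}(c) is non-decreasing exactly when $\sigma$ is a shell numbering for $f$. Since $f^{-1}$ is a bijection from $\mathbb{N}$ onto $\mathbb{N}^d$, one has $U_\sigma^{<n}=\{\,f^{-1}(k):a_k<n\,\}$ and $U_\sigma^{<n+1}=\{\,f^{-1}(k):a_k\le n\,\}$, so the cardinalities appearing in \eqref{U-ineq} simply count how many terms of $(a_k)$ lie strictly below, respectively at or below, the level $n$.

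For the forward implication I would assume $\sigma$ is a shell numbering, fix $\mathbf{x}\in\mathbb{N}^d$, and set $m=f(\mathbf{x})$ and $n=\sigma(\mathbf{x})$, so that $\mathbf{x}=f^{-1}(m)$ and $a_m=n$. By Lemma~\ref{alt-shell-numbering} the sequence $(a_k)$ is non-decreasing, so every index $k\ge m$ satisfies $a_k\ge n$; hence every $k$ with $a_k<n$ has $k\le m-1$, and therefore $\bigl\lvert U_\sigma^{<n}\bigr\rvert\le m=f(\mathbf{x})$. Dually, every index $k\le m$ satisfies $a_k\le n$, so $f^{-1}(0),\dots,f^{-1}(m)$ are $m+1$ distinct members of $U_\sigma^{<n+1}$, whence $\bigl\lvert U_\sigma^{<n+1}\bigr\rvert\ge m+1>f(\mathbf{x})$. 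These two bounds are exactly \eqref{U-ineq}.

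For the converse I would assume \eqref{U-ineq} holds for every point and verify Definition~\ref{shell-numbering} directly. Given $\mathbf{x},\mathbf{y}$ with $\sigma(\mathbf{x})<\sigma(\mathbf{y})$, put $n=\sigma(\mathbf{x})$; then $n+1\le\sigma(\mathbf{y})$, so $U_\sigma^{<n+1}\subseteq U_\sigma^{<\sigma(\mathbf{y})}$ and hence $\bigl\lvert U_\sigma^{<n+1}\bigr\rvert\le\bigl\lvert U_\sigma^{<\sigma(\mathbf{y})}\bigr\rvert$. Applying the upper bound in \eqref{U-ineq} to $\mathbf{x}$ and the lower bound to $\mathbf{y}$ gives
\begin{equation*}
f(\mathbf{x})<\bigl\lvert U_\sigma^{<n+1}\bigr\rvert\le\bigl\lvert U_\sigma^{<\sigma(\mathbf{y})}\bigr\rvert\le f(\mathbf{y}),
\end{equation*}
which is precisely what Definition~\ref{shell-numbering} demands.

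I do not anticipate a real obstacle: the proof is bookkeeping. The two points that require attention are keeping the shell $\{a_k<n\}$ distinct from the shell $\{a_k\le n\}$ (the source of the ``$n$'' versus ``$n+1$'' in \eqref{U-ineq}), and observing that a priori $U_\sigma^{<n}$ could be infinite --- so one should note that being a shell numbering (in the first direction) and the bound $\lvert U_\sigma^{<\sigma(\mathbf{y})}\rvert\le f(\mathbf{y})$ with $f(\mathbf{y})\in\mathbb{N}$ (in the second) already force the relevant sets to be finite before comparing their sizes with natural numbers.
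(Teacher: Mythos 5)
Your proposal is correct and follows essentially the same route as the paper: the converse direction is the identical chain $f(\mathbf{x})<\bigl\lvert U_\sigma^{<n+1}\bigr\rvert\leq\bigl\lvert U_\sigma^{<\sigma(\mathbf{y})}\bigr\rvert\leq f(\mathbf{y})$, and the forward direction uses Lemma~\ref{alt-shell-numbering}(c) to bound $\bigl\lvert U_\sigma^{<n}\bigr\rvert$ and $\bigl\lvert U_\sigma^{<n+1}\bigr\rvert$ by counting the indices below $f(\mathbf{x})$, which is just the paper's set inclusions restated for the sequence $\sigma\bigl(f^{-1}(k)\bigr)$. Your closing remark on finiteness matches the paper's remark on interpreting the inequalities when the sets are infinite.
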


\begin{remark*}
In the statement of this theorem, if the cardinality of $U_\sigma^{<n}$ is infinite, then the inequality $\bigl\lvert U_\sigma^{<n}\bigr\rvert\leq i$ is false for all non-negative integers $i$.  If the cardinality of $U_\sigma^{<n+1}$ is infinite, then $i<\bigl\lvert U_\sigma^{<n+1}\bigr\rvert$ is true for all non-negative integers $i$.
\end{remark*}

\begin{proof}
Let $\sigma$ be a function from $\mathbb{N}^d$ to $\mathbb{N}$.  Note that if $\sigma(\mathbf{x})<\sigma(\mathbf{y})$ for any $\mathbf{x},\mathbf{y}\in\mathbb{N}^d$, then inequality~\eqref{U-ineq} implies that
\begin{equation*}
f(\mathbf{x})<\bigl\lvert U_\sigma^{<\sigma(\mathbf{x})+1}\bigr\rvert\leq\bigl\lvert U_\sigma^{<\sigma(\mathbf{y})}\bigr\rvert\leq f(\mathbf{y}).
\end{equation*}
Hence, $\sigma$ is a shell numbering for $f$ if inequality~\eqref{U-ineq} holds.

Conversely, suppose that $\sigma$ is a shell numbering for $f$, and consider any $\mathbf{x}\in\mathbb{N}^d$.  Let $i=f(\mathbf{x})$ and $n=\sigma(\mathbf{x})$, and note that
\begin{equation*}
\sigma\bigl(f^{-1}(i)\bigr)=\sigma\bigl(f^{-1}\bigl(f(\mathbf{x})\bigr)\bigr)=\sigma(\mathbf{x})=n.
\end{equation*}
Now, by Lemma~\ref{alt-shell-numbering},
\begin{equation*}
\sigma\bigl(f^{-1}(0)\bigr),\;\;\sigma\bigl(f^{-1}(1)\bigr),\;\;\sigma\bigl(f^{-1}(2)\bigr),\;\;\ldots
\end{equation*}
is a non-decreasing sequence.  Hence, it must be the case that
\begin{equation*}
U_\sigma^{<n}\subseteq\Bigl\{\,f^{-1}(0),\,f^{-1}(1),\,\ldots,\,f^{-1}(i-1)\,\Bigr\}
\end{equation*}
because $\sigma\bigl(f^{-1}(i)\bigr)=n$ and $U_\sigma^{<n}$ is the set of all points $\mathbf{y}\in\mathbb{N}^d$ such that $\sigma(\mathbf{y})<n$.  Similarly,
\begin{equation*}
\Bigl\{\,f^{-1}(0),\,f^{-1}(1),\,\ldots,\,f^{-1}(i-1),\,f^{-1}(i)\,\Bigr\}\subseteq U_\sigma^{<n+1}.
\end{equation*}
We may conclude that $\bigl\lvert U_\sigma^{<n}\bigr\rvert\leq i<\bigl\lvert U_\sigma^{<n+1}\bigr\rvert$.  That is, $\bigl\lvert U_\sigma^{<n}\bigr\rvert\leq f(\mathbf{x})<\bigl\lvert U_\sigma^{<n+1}\bigr\rvert$.
\end{proof}

Given a function $\sigma\colon\mathbb{N}^d\to\mathbb{N}$ and a non-negative integer $n$, $\bigl\lvert U_\sigma^{<n}\bigr\rvert$ is the number of points $\mathbf{x}$ such that $\sigma(\mathbf{x})<n$.  From this fact, simple combinatorial arguments are often sufficient to calculate $\bigl\lvert U_\sigma^{<n}\bigr\rvert$.  For example, using the function $\delta(x_1,x_2,\ldots,x_d)=x_1+x_2+\cdots+x_d$, we have that $\bigl\lvert U_\delta^{<w}\bigr\rvert=\binom{w+d-1}{d}$.  And using $\max(x_1,x_2,\ldots,x_d)$, we have that $\bigl\lvert U_{\max}^{<m}\bigr\rvert=m^d$.  These two observations imply the following corollaries of Theorem~\ref{shell-inequality}.

\begin{corollary}\label{diag-inequality}
Let $f\colon\mathbb{N}^d\to\mathbb{N}$ be any $d$-tupling function.  The function $f$ has diagonal shells if and only if, for all points $(x_1,x_2,\ldots,x_d)$ in $\mathbb{N}^d$,
\begin{equation*}
\tbinom{w+d-1}{d}\leq f(x_1,x_2,\ldots,x_d)<\tbinom{w+d}{d},
\end{equation*}
where $w=x_1+x_2+\cdots+x_d$.
\end{corollary}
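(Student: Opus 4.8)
The plan is to obtain this corollary by specializing $\sigma$ in Theorem~\ref{shell-inequality} to the function $\delta(x_1,x_2,\ldots,x_d)=x_1+x_2+\cdots+x_d$. By the definition of diagonal shells, $f$ has diagonal shells if and only if $\delta$ serves as a shell numbering for $f$ (and is then designated the standard one), so it suffices to characterize when $\delta$ is a shell numbering for $f$. Theorem~\ref{shell-inequality} already does exactly that: $\delta$ is a shell numbering for $f$ if and only if
\begin{equation*}
\bigl\lvert U_\delta^{<n}\bigr\rvert\leq f(\mathbf{x})<\bigl\lvert U_\delta^{<n+1}\bigr\rvert
\end{equation*}
for every $\mathbf{x}=(x_1,\ldots,x_d)\in\mathbb{N}^d$, where $n=\delta(\mathbf{x})=w$. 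Hence the whole problem reduces to evaluating the two cardinalities $\bigl\lvert U_\delta^{<w}\bigr\rvert$ and $\bigl\lvert U_\delta^{<w+1}\bigr\rvert$.

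I would carry out that count next. By definition $U_\delta^{<w}$ is the set of points $(y_1,\ldots,y_d)\in\mathbb{N}^d$ with $y_1+\cdots+y_d<w$, equivalently $y_1+\cdots+y_d\leq w-1$. Adjoining a slack variable $y_{d+1}=(w-1)-(y_1+\cdots+y_d)$ puts these points in bijection with the non-negative integer solutions of $y_1+\cdots+y_{d+1}=w-1$, of which there are $\binom{(w-1)+d}{d}=\binom{w+d-1}{d}$ by the usual stars-and-bars identity. (When $w=0$ this reads $\binom{d-1}{d}=0$, consistent with $U_\delta^{<0}=\emptyset$; note also that $U_\delta^{<w}$ is finite for every $w$, so the infinite-cardinality caveats in the remark following Theorem~\ref{shell-inequality} never come into play here.) Replacing $w$ by $w+1$ gives $\bigl\lvert U_\delta^{<w+1}\bigr\rvert=\binom{w+d}{d}$.

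Substituting these values into inequality~\eqref{U-ineq} then yields
\begin{equation*}
\binom{w+d-1}{d}\leq f(x_1,x_2,\ldots,x_d)<\binom{w+d}{d}
\end{equation*}
for all $(x_1,\ldots,x_d)\in\mathbb{N}^d$, which is precisely the stated characterization. I do not anticipate any genuine obstacle here: the only nonroutine ingredient is the lattice-point count $\bigl\lvert U_\delta^{<w}\bigr\rvert=\binom{w+d-1}{d}$, which is classical and has in fact already been recorded in the paragraph immediately preceding the corollary, so in the final write-up that step can simply be cited rather than reproved; everything else is a direct appeal to Theorem~\ref{shell-inequality} together with the definition of diagonal shells.
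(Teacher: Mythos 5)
Your proposal is correct and follows exactly the paper's route: the paper derives this corollary immediately from Theorem~\ref{shell-inequality} by noting that $f$ has diagonal shells precisely when $\delta(x_1,\ldots,x_d)=x_1+\cdots+x_d$ is a shell numbering for $f$, together with the count $\bigl\lvert U_\delta^{<w}\bigr\rvert=\binom{w+d-1}{d}$ stated in the paragraph preceding the corollary. Your stars-and-bars justification of that count (and the $w=0$ check) is a harmless elaboration of what the paper treats as a "simple combinatorial argument."
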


\begin{corollary}\label{cubic-inequality}
Let $f\colon\mathbb{N}^d\to\mathbb{N}$ be any $d$-tupling function.  The function $f$ has cubic shells if and only if, for all points $(x_1,x_2,\ldots,x_d)$ in $\mathbb{N}^d$,
\begin{equation*}
m^d\leq f(x_1,x_2,\ldots,x_d)<(m+1)^d,
\end{equation*}
where $m=\max(x_1,x_2,\ldots,x_d)$.
\end{corollary}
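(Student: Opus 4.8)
The plan is to obtain this corollary directly from Theorem~\ref{shell-inequality} by specializing the auxiliary function $\sigma$ to $\max(x_1,x_2,\ldots,x_d)$. First I would recall that, by the definition given above, $f$ has cubic shells precisely when $\max(x_1,x_2,\ldots,x_d)$ is a shell numbering for $f$; so the task reduces to unwinding the inequality~\eqref{U-ineq} of Theorem~\ref{shell-inequality} in the case $\sigma=\max$.

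The one piece of genuine content is the cardinality count $\bigl\lvert U_{\max}^{<m}\bigr\rvert=m^d$. I would verify this by observing that a point $(x_1,x_2,\ldots,x_d)$ lies in $U_{\max}^{<m}$ if and only if $\max(x_1,x_2,\ldots,x_d)<m$, i.e.\ if and only if each coordinate $x_i$ belongs to $\{0,1,\ldots,m-1\}$; since there are $m$ independent choices for each of the $d$ coordinates, the set has exactly $m^d$ elements. In particular $U_{\max}^{<m}$ is finite for every $m$, so the remark following Theorem~\ref{shell-inequality} about infinite cardinalities never comes into play.

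Finally I would substitute $n=m=\max(x_1,x_2,\ldots,x_d)$ into inequality~\eqref{U-ineq}, using $\bigl\lvert U_{\max}^{<m}\bigr\rvert=m^d$ and $\bigl\lvert U_{\max}^{<m+1}\bigr\rvert=(m+1)^d$, to obtain $m^d\le f(x_1,x_2,\ldots,x_d)<(m+1)^d$. Because Theorem~\ref{shell-inequality} is itself a biconditional, this argument runs in both directions and yields exactly the stated ``if and only if''. I do not expect any real obstacle; the only point requiring a moment's care is the boundary case $m=0$, where the count gives $\bigl\lvert U_{\max}^{<0}\bigr\rvert=0=0^d$, consistent with $f$ sending the single point $(0,0,\ldots,0)$ to $0$.
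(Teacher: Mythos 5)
Your proposal is correct and is exactly the paper's argument: the corollary is stated there as an immediate consequence of Theorem~\ref{shell-inequality} together with the count $\bigl\lvert U_{\max}^{<m}\bigr\rvert=m^d$, which you verify in the same coordinate-by-coordinate way. Nothing is missing, and the biconditional does indeed transfer directly from the theorem as you say.
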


Next, we say that a function $f\colon\mathbb{N}^d\to\mathbb{N}$ is \emph{max-dominating} if and only if
\begin{equation*}
\max(\mathbf{x})\leq f(\mathbf{x})\quad\text{for all points $\mathbf{x}\in\mathbb{N}^d$}.
\end{equation*}
%
% The term ``dominating'' is used in this sense in the Dominated
% Convergence Theorem.
%
In particular, if $f$ is max-dominating then
\begin{equation*}
\mathbf{x}\ne(0,0,\ldots,0)\quad\text{implies}\quad0<\max(\mathbf{x})\leq f(\mathbf{x}).
\end{equation*}
It immediately follows that $f(0,0,\ldots,0)=0$ for every max-dominating $d$-tupling function $f\colon\mathbb{N}^d\to\mathbb{N}$, since $f(\mathbf{x})$ must be $0$ for some $\mathbf{x}\in\mathbb{N}^d$.
%
% Another way to derive this fact is to notice that for every $d$-tupling
% function $f\colon\mathbb{N}^d\to\mathbb{N}$, $f(\mathbf{x})$ must be $0$
% for some $\mathbf{x}\in\mathbb{N}^d$.  In particular, if $f$ is
% max-dominating then $\max(mathbf{x})\leq f(\mathbf{x})=0$ at this point
% $\mathbf{x}$.  Hence, $\mathbf{x}=(0,0,\ldots,0)$.
%
We also have the following lemma.

\begin{lemma}\label{max-dominating}
If a $d$-tupling function $f\colon\mathbb{N}^d\to\mathbb{N}$ has diagonal shells or cubic shells, then $f$ is max-dominating.
\end{lemma}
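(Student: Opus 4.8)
The plan is to prove the contrapositive-flavored statement directly by using the two corollaries just established. Suppose $f\colon\mathbb{N}^d\to\mathbb{N}$ has diagonal shells or cubic shells, and fix an arbitrary point $\mathbf{x}=(x_1,x_2,\ldots,x_d)$ in $\mathbb{N}^d$. I want to show $\max(\mathbf{x})\le f(\mathbf{x})$. In either case, the relevant corollary gives a lower bound on $f(\mathbf{x})$ in terms of a natural shell parameter, so the work reduces to a purely numerical inequality comparing that lower bound with $\max(\mathbf{x})$.

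First I would handle the cubic-shells case, which is the easier one. By Corollary~\ref{cubic-inequality}, $f(\mathbf{x})\ge m^d$ where $m=\max(\mathbf{x})$. Since $d\ge1$ and $m\ge0$, we have $m^d\ge m$: this is clear for $m=0$, and for $m\ge1$ it follows because $m^d\ge m^1=m$. Hence $f(\mathbf{x})\ge m=\max(\mathbf{x})$, as desired.

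Next I would handle the diagonal-shells case. By Corollary~\ref{diag-inequality}, $f(\mathbf{x})\ge\binom{w+d-1}{d}$ where $w=x_1+x_2+\cdots+x_d$. Here $w\ge\max(\mathbf{x})$, since each $x_i\ge0$. So it suffices to show $\binom{w+d-1}{d}\ge w$ for all $w\in\mathbb{N}$ and all $d\ge1$. For $w=0$ the left side is $\binom{d-1}{d}=0=w$. For $w\ge1$, the binomial coefficient $\binom{w+d-1}{d}$ counts something at least as large as $w$; concretely, $\binom{w+d-1}{d}\ge\binom{w}{1}=w$ — one way to see this is that $\binom{w+d-1}{d}$ equals the number of multisets of size $d$ from a $w$-element set, which is at least $w$ whenever $d\ge1$ (the $w$ multisets $\{i,i,\ldots,i\}$ already witness this). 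Combining the two cases, $f$ is max-dominating in all cases.

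I do not expect any genuine obstacle here: the only thing to be careful about is the degenerate behavior at the origin (where both sides can be $0$) and the edge cases $m=0$ or $w=0$ in the numerical inequalities, and also confirming that $w\ge\max(\mathbf{x})$ uses non-negativity of the coordinates. The mild subtlety worth spelling out is simply that $\max(\mathbf{x})\le w$ and that $\binom{w+d-1}{d}\ge w$; everything else is immediate from the corollaries.
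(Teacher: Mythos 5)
Your proposal is correct and follows essentially the same route as the paper: in the cubic case the paper also chains $\max(\mathbf{x})\le\bigl(\max(\mathbf{x})\bigr)^d\le f(\mathbf{x})$ via Corollary~\ref{cubic-inequality}, and in the diagonal case it chains $\max(\mathbf{x})\le x_1+\cdots+x_d\le\binom{w+d-1}{d}\le f(\mathbf{x})$ via Corollary~\ref{diag-inequality}. Your extra justification of the inequality $\binom{w+d-1}{d}\ge w$ (via counting constant multisets) is just a more explicit spelling-out of a step the paper takes for granted.
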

\begin{proof}
Suppose that $f\colon\mathbb{N}^d\to\mathbb{N}$ is a $d$-tupling function, and consider any non-negative integers $x_1$, $x_2$, \ldots, $x_d$.  Note that if $f$ has diagonal shells, then by Corollary~\ref{diag-inequality},
\begin{equation*}
\max(x_1,x_2,\ldots,x_d)\leq x_1+\cdots+x_d\leq\tbinom{x_1+\cdots+x_d+d-1}{d}\leq f(x_1,x_2,\ldots,x_d).
\end{equation*}
Alternatively, if $f$ has cubic shells, then
\begin{equation*}
\max(x_1,x_2,\ldots,x_d)\leq\bigl(\max(x_1,x_2,\ldots,x_d)\bigr)^d\leq f(x_1,x_2,\ldots,x_d)
\end{equation*}
by Corollary~\ref{cubic-inequality}.  In either case, $f$ is max-dominating.
\end{proof}

The Rosenberg-Strong pairing function, described in the previous section, is the most well-known example of a pairing function with cubic shells.  A higher-dimensional generalization of this function has also been introduced by Rosenberg and Strong~\citetext{\citeyear{Rosenberg1972}; \citealp{Rosenberg1974}}.  In particular, the \emph{Rosenberg-Strong $d$-tupling function}\footnote{
The function originally described by Rosenberg and Strong was a $d$-tupling function for the positive integers.  The variant defined here is obtained by translating the Rosenberg-Strong function from the positive integers to the non-negative integers, and by reversing the order of the arguments.
} for the non-negative integers is defined recursively so that $r_1(x_1)=x_1$, and so that for all integers $d>1$,
\begin{equation*}
r_d(x_1,\ldots,x_{d-1},x_d)=r_{d-1}(x_1,\ldots,x_{d-1})+m^d+(m-x_d)\bigl((m+1)^{d-1}-m^{d-1}\bigr),
\end{equation*}
where $m=\max(x_1,\ldots,x_{d-1},x_d)$.  Note that equation~\eqref{r2} agrees with this definition when $d=2$.  It follows from Corollary~\ref{cubic-inequality} and Lemma~\ref{r-cubic} that $r_d(x_1,x_2,\ldots,x_d)$ has cubic shells for each positive integer $d$.

The inverse of the Rosenberg-Strong $d$-tupling function $r_d(x_1,x_2,\ldots,x_d)$ can also be defined recursively.  In particular, $r_1^{-1}(z)=z$.  And for all integers $d>1$,
\begin{equation}\label{r-inv}
r_d^{-1}(z)=\Bigl(r_{d-1}^{-1}\bigl(z-m^d-(m-x_d)((m+1)^{d-1}-m^{d-1})\bigr),x_d\Bigr),
\end{equation}
where
\begin{equation}\label{xd}
x_d=m-\Biggl\lfloor\frac{\max\bigl(0,z-m^d-m^{d-1}\bigr)}{(m+1)^{d-1}-m^{d-1\rule{0pt}{6pt}}}\Biggr\rfloor
\end{equation}
and  $m=\bigl\lfloor\sqrt[d]{z}\,\bigr\rfloor$.

\section{Applications}

It is a common convention~\citep{West1996} that the vertices in a binary tree may have $0$, $1$, or $2$ children.  A binary tree where each vertex has either $0$ or $2$ children is said to be a \emph{full binary tree}.  We use $o$ to denote the trivial binary tree that has only one vertex, and we use $\tau(a,b)$ to denote the binary tree with left subtree $a$ and right subtree $b$.  Then, the set $T$ of all full binary trees is the smallest set that contains $o$ and that is closed under the operation $\tau$.  Let the \emph{height} of a binary tree be the length of the longest path from a leaf to the tree's root.  We use $H(t)$ to denote the height of the binary tree $t$.  For all binary trees of the form $\tau(a,b)$,
\begin{equation*}
H\bigl(\tau(a,b)\bigr)=1+\max\bigl(H(a),H(b)\bigr).
\end{equation*}
The only binary tree of height zero is the trivial binary tree $o$.

An important application of pairing functions is in the enumeration of full binary trees.

\begin{theorem}\label{tree-enum}
Let $f$ be any max-dominating pairing function for the non-negative integers.  Let $\phi_f(0)=o$, and for each pair $(x,y)$ of non-negative integers let
\begin{equation*}
\phi_f\bigl(f(x,y)+1\bigr)=\tau\bigl(\phi_f(x),\phi_f(y)\bigr).
\end{equation*}
Then, $\phi_f$ is an enumeration of the set $T$ of full binary trees.
\end{theorem}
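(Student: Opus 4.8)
The plan is to establish three things in turn: that the recursion defining $\phi_f$ determines a well-defined function $\mathbb{N}\to T$; that this function is surjective onto $T$; and that it is injective. Together these say $\phi_f$ is a one-to-one correspondence from $\mathbb{N}$ to $T$, i.e.\ an enumeration of $T$.

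First, well-definedness. Since $f\colon\mathbb{N}^2\to\mathbb{N}$ is a one-to-one correspondence, the quantities $f(x,y)+1$ range over exactly the positive integers as $(x,y)$ ranges over $\mathbb{N}^2$, each positive integer arising from a unique pair $(x,y)$. Hence every non-negative integer $n$ is governed by exactly one clause of the definition: either $n=0$, or $n=f(x,y)+1$ for a unique $(x,y)$. The recursion is well-founded precisely because $f$ is max-dominating: when $n=f(x,y)+1\ge 1$ we have $x\le\max(x,y)\le f(x,y)=n-1<n$, and likewise $y<n$, so $\phi_f(n)$ is pinned down in terms of values $\phi_f(x),\phi_f(y)$ at strictly smaller arguments. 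A strong induction on $n$ then shows that $\phi_f(n)$ is defined and lies in $T$, using $o\in T$ in the base case and the closure of $T$ under $\tau$ in the inductive step. This verification of well-foundedness — and the analogous size bound used below — is the only place the max-dominating hypothesis enters; it is also the step where a careless argument would leave a gap, so I regard it as the crux.

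Second, surjectivity. I would argue by structural induction on a full binary tree $t$, legitimate because $T$ is the smallest set containing $o$ and closed under $\tau$. The trivial tree is $\phi_f(0)$. If $t=\tau(a,b)$, then by the inductive hypothesis $a=\phi_f(x)$ and $b=\phi_f(y)$ for some $x,y\in\mathbb{N}$, so $t=\tau(\phi_f(x),\phi_f(y))=\phi_f\bigl(f(x,y)+1\bigr)$ lies in the image of $\phi_f$.

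Third, injectivity. Here I use that the term-forming operation $\tau$ is injective on $T^2$ and that $o$ is not of the form $\tau(a,b)$ (unique readability of full binary trees). First, $\phi_f(n)=o$ forces $n=0$, since for $n=f(x,y)+1$ one has $\phi_f(n)=\tau(\phi_f(x),\phi_f(y))\ne o$. Now suppose $\phi_f$ were not injective and let $n$ be least with $\phi_f(m)=\phi_f(n)$ for some $m<n$; then $\phi_f$ is injective on $\{0,1,\ldots,n-1\}$. The common value cannot be $o$ (that would force $n=0$), so it equals $\tau(a,b)$, giving $m=f(x_m,y_m)+1$ and $n=f(x_n,y_n)+1$ with $\phi_f(x_m)=\phi_f(x_n)=a$ and $\phi_f(y_m)=\phi_f(y_n)=b$ by injectivity of $\tau$. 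All four of $x_m,y_m,x_n,y_n$ are $<n$ (again using that $f$ is max-dominating, together with $m\le n-1$), so injectivity of $\phi_f$ below $n$ yields $x_m=x_n$ and $y_m=y_n$, hence $f(x_m,y_m)=f(x_n,y_n)$ and $m=n$, a contradiction. Therefore $\phi_f$ is a bijection from $\mathbb{N}$ onto $T$, which is what was to be shown.
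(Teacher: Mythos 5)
Your proposal is correct, and it is organized differently from the paper's proof. The paper also begins with a strong induction on $n$ showing $\phi_f(n)\in T$ (using max-domination exactly as you do, to get $x,y\le f(x,y)=n$), but it then establishes existence \emph{and} uniqueness of the preimage simultaneously, by a single strong induction on the height of the tree $t$: writing $t=\tau(a,b)$ with subtrees of smaller height, the unique $x$ and $y$ with $\phi_f(x)=a$, $\phi_f(y)=b$ give $n=f(x,y)+1$ as the unique preimage of $t$. You instead split this into surjectivity (structural induction on $t$, essentially the same as the paper's height induction) and a separate injectivity argument by least counterexample on $\mathbb{N}$, invoking unique readability of trees ($o\ne\tau(a,b)$ and injectivity of $\tau$ on $T^2$) together with injectivity of $f$. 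The trade-off: the paper's combined tree induction is more compact, but it leans tacitly on exactly the unique-readability facts and on the observation that any preimage of a nontrivial tree must arise from the $\tau$-clause, points your argument spells out; you also make explicit the well-definedness of the recursion (each $n>0$ is $f(x,y)+1$ for a unique pair, and the arguments strictly decrease), which the paper passes over without comment. Both proofs use the max-dominating hypothesis in the same essential place, so the difference is one of decomposition rather than of underlying idea.
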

\begin{proof}
First, we show that $\phi_f(n)\in T$ for each $n\in\mathbb{N}$.  The proof is by strong induction on $n$.  For the base step, notice that $\phi_f(0)=o\in T$.  Now consider any non-negative integer $n$ and suppose, as the induction hypothesis, that $\phi_f(i)\in T$ for all non-negative integers $i\leq n$.  Since $f\colon\mathbb{N}^2\to\mathbb{N}$ is a pairing function, there exist non-negative integers $x$ and $y$ such that $f(x,y)=n$.  Then, by the definition of $\phi_f$,
\begin{equation}\label{phi-induct}
\phi_f(n+1)=\phi_f\bigl(f(x,y)+1\bigr)=\tau\bigl(\phi_f(x),\phi_f(y)\bigr).
\end{equation}
But $x\leq f(x,y)=n$ and $y\leq f(x,y)=n$ because $f$ is max-dominating.  Therefore, by the induction hypothesis, $\phi_f(x)\in T$ and $\phi_f(y)\in T$.  We may conclude by equation~\eqref{phi-induct} that $\phi_f(n+1)\in T$.

Next, we show that for each $t\in T$ there exists a unique $n\in\mathbb{N}$ such that $\phi_f(n)=t$.  The proof is by strong induction on the height of $t$.  For the base step, notice that $0$ is the only non-negative integer $n$ such that $\phi_f(n)=o$.  Then consider any non-negative integer $k$ and suppose, as the induction hypothesis, that for each $t\in T$ whose height is less than or equal to $k$, there exists a unique $n\in\mathbb{N}$ such that $\phi_f(n)=t$.  Now consider any $t\in T$ with height $k+1$.  Since the height of $t$ is greater than zero, it must be the case that $t=\tau(a,b)$ for some binary trees $a\in T$ and $b\in T$ whose heights are each less than or equal to $k$.  By the induction hypothesis, there exists a unique $x\in\mathbb{N}$ such that $\phi_f(x)=a$ and a unique $y\in\mathbb{N}$ such that $\phi_f(y)=b$.  Therefore, $n=f(x,y)+1$ is the unique non-negative integer such that
\begin{equation*}
\phi_f(n)=\phi_f\bigl(f(x,y)+1\bigr)=\tau\bigl(\phi_f(x),\phi_f(y)\bigr)=\tau(a,b)=t.
\end{equation*}
We may conclude that $\phi_f$ is a one-to-one correspondence from $\mathbb{N}$ to $T$.  That is, $\phi_f$ is an enumeration of the set $T$.
\end{proof}

It follows that any max-dominating pairing function for the non-negative integers can be used to construct an enumeration of the full binary trees.\footnote{
Incidentally, Theorem~\ref{tree-enum} can also be used to construct an enumeration of \emph{all} binary trees, including those binary trees that are not full binary trees.  Let $D$ denote the operation of \emph{defoliation}.  That is, for each non-trivial binary tree $t$, let $D(t)$ denote the tree that is obtained from $t$ by deleting all of its leaves.  The function $D$ is a one-to-one correspondence from the set of non-trivial full binary trees to the set of all binary trees.  As a consequence, $D\bigl(\phi_f(n+1)\bigr)$ is an enumeration of all binary trees, where $f$ is any max-dominating pairing function for the non-negative integers.
}  For example, the enumeration $\phi_c$ that is constructed using Cantor's pairing function is illustrated in Figure~\ref{c-tree},
\begin{figure}
\begin{center}
\begin{picture}(263,173)(-16.5,-173)
%
% Exterior frame:
%\put(-16.5,-173){\framebox(263,173){}}
%
% Trees:
\thinlines
\put(0,-7){\makebox(0,0)[b]{0}}
\put(0,-12){\treeo}
\put(45,-7){\makebox(0,0)[b]{1}}
\put(45,-12){\treei{\treeo}{\treeo}}
\put(90,-7){\makebox(0,0)[b]{2}}
\put(90,-12){\treei{\treeo}{\treeii{\treeo}{\treeo}}}
\put(135,-7){\makebox(0,0)[b]{3}}
\put(135,-12){\treei{\treeii{\treeo}{\treeo}}{\treeo}}
\put(180,-7){\makebox(0,0)[b]{4}}
\put(180,-12){\treei{\treeo}{\treeii{\treeo}{\treeiii{\treeo}{\treeo}}}}
\put(225,-7){\makebox(0,0)[b]{5}}
\put(225,-12){\treei{\treeii{\treeo}{\treeo}}{\treeii{\treeo}{\treeo}}}
\put(0,-62){\makebox(0,0)[b]{6}}
\put(0,-67){\treei{\treeii{\treeo}{\treeiii{\treeo}{\treeo}}}{\treeo}}
\put(45,-62){\makebox(0,0)[b]{7}}
\put(45,-67){\treei{\treeo}{\treeii{\treeiii{\treeo}{\treeo}}{\treeo}}}
\put(90,-62){\makebox(0,0)[b]{8}}
\put(90,-67){\treei{\treeii{\treeo}{\treeo}}{\treeii{\treeo}{\treeiii{\treeo}{\treeo}}}}
\put(135,-62){\makebox(0,0)[b]{9}}
\put(135,-67){\treei{\treeii{\treeo}{\treeiii{\treeo}{\treeo}}}{\treeii{\treeo}{\treeo}}}
\put(180,-62){\makebox(0,0)[b]{10}}
\put(180,-67){\treei{\treeii{\treeiii{\treeo}{\treeo}}{\treeo}}{\treeo}}
\put(225,-62){\makebox(0,0)[b]{11}}
\put(225,-67){\treei{\treeo}{\treeii{\treeo}{\treeiii{\treeo}{\treeiii{\treeo}{\treeo}}}}}
\put(0,-127){\makebox(0,0)[b]{12}}
\put(0,-132){\treei{\treeii{\treeo}{\treeo}}{\treeii{\treeiii{\treeo}{\treeo}}{\treeo}}}
\put(45,-127){\makebox(0,0)[b]{13}}
\put(45,-132){\treei{\treeii{\treeo}{\treeiii{\treeo}{\treeo}}}{\treeii{\treeo}{\treeiii{\treeo}{\treeo}}}}
\put(90,-127){\makebox(0,0)[b]{14}}
\put(90,-132){\treei{\treeii{\treeiii{\treeo}{\treeo}}{\treeo}}{\treeii{\treeo}{\treeo}}}
\put(135,-127){\makebox(0,0)[b]{15}}
\put(135,-132){\treei{\treeii{\treeo}{\treeiii{\treeo}{\treeiii{\treeo}{\treeo}}}}{\treeo}}
\put(180,-127){\makebox(0,0)[b]{16}}
\put(180,-132){\treei{\treeo}{\treeii{\treeiii{\treeo}{\treeo}}{\treeiii{\treeo}{\treeo}}}}
\put(225,-127){\makebox(0,0)[b]{17}}
\put(225,-132){\treei{\treeii{\treeo}{\treeo}}{\treeii{\treeo}{\treeiii{\treeo}{\treeiii{\treeo}{\treeo}}}}}
\end{picture}
\caption{The first $18$ full binary trees in the enumeration $\phi_c$.}
\label{c-tree}
\end{center}
\end{figure}
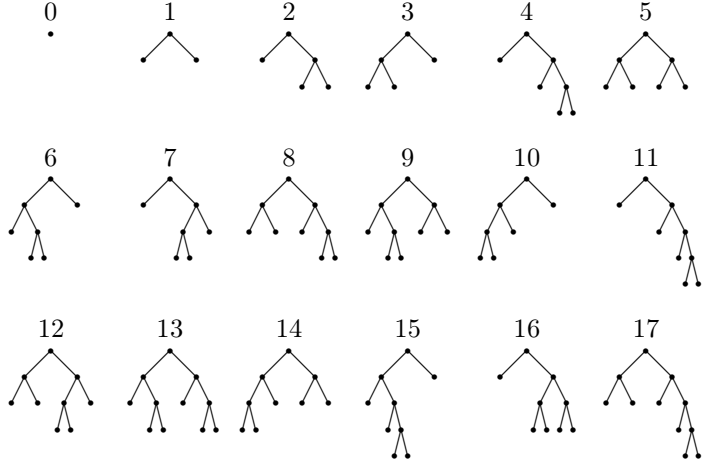
and the enumeration $\phi_{r_2}$, constructed using the Rosenberg-Strong pairing function, is illustrated in Figure~\ref{r-tree}.
\begin{figure}[t]
\begin{center}
\begin{picture}(260.5,153)(-16.5,-153)
%
% Exterior frame:
%\put(-16.5,-153){\framebox(260.5,153){}}
%
% Trees:
\thinlines
\put(0,-7){\makebox(0,0)[b]{0}}
\put(0,-12){\treeo}
\put(45,-7){\makebox(0,0)[b]{1}}
\put(45,-12){\treei{\treeo}{\treeo}}
\put(90,-7){\makebox(0,0)[b]{2}}
\put(90,-12){\treei{\treeo}{\treeii{\treeo}{\treeo}}}
\put(135,-7){\makebox(0,0)[b]{3}}
\put(135,-12){\treei{\treeii{\treeo}{\treeo}}{\treeii{\treeo}{\treeo}}}
\put(180,-7){\makebox(0,0)[b]{4}}
\put(180,-12){\treei{\treeii{\treeo}{\treeo}}{\treeo}}
\put(225,-7){\makebox(0,0)[b]{5}}
\put(225,-12){\treei{\treeo}{\treeii{\treeo}{\treeiii{\treeo}{\treeo}}}}
\put(0,-62){\makebox(0,0)[b]{6}}
\put(0,-67){\treei{\treeii{\treeo}{\treeo}}{\treeii{\treeo}{\treeiii{\treeo}{\treeo}}}}
\put(45,-62){\makebox(0,0)[b]{7}}
\put(45,-67){\treei{\treeii{\treeo}{\treeiii{\treeo}{\treeo}}}{\treeii{\treeo}{\treeiii{\treeo}{\treeo}}}}
\put(90,-62){\makebox(0,0)[b]{8}}
\put(90,-67){\treei{\treeii{\treeo}{\treeiii{\treeo}{\treeo}}}{\treeii{\treeo}{\treeo}}}
\put(135,-62){\makebox(0,0)[b]{9}}
\put(135,-67){\treei{\treeii{\treeo}{\treeiii{\treeo}{\treeo}}}{\treeo}}
\put(180,-62){\makebox(0,0)[b]{10}}
\put(180,-67){\treei{\treeo}{\treeii{\treeiii{\treeo}{\treeo}}{\treeiii{\treeo}{\treeo}}}}
\put(225,-62){\makebox(0,0)[b]{11}}
\put(225,-67){\treei{\treeii{\treeo}{\treeo}}{\treeii{\treeiii{\treeo}{\treeo}}{\treeiii{\treeo}{\treeo}}}}
\put(0,-117){\makebox(0,0)[b]{12}}
\put(0,-122){\treei{\treeii{\treeo}{\treeiii{\treeo}{\treeo}}}{\treeii{\treeiii{\treeo}{\treeo}}{\treeiii{\treeo}{\treeo}}}}
\put(45,-117){\makebox(0,0)[b]{13}}
\put(45,-122){\treei{\treeii{\treeiii{\treeo}{\treeo}}{\treeiii{\treeo}{\treeo}}}{\treeii{\treeiii{\treeo}{\treeo}}{\treeiii{\treeo}{\treeo}}}}
\put(90,-117){\makebox(0,0)[b]{14}}
\put(90,-122){\treei{\treeii{\treeiii{\treeo}{\treeo}}{\treeiii{\treeo}{\treeo}}}{\treeii{\treeo}{\treeiii{\treeo}{\treeo}}}}
\put(135,-117){\makebox(0,0)[b]{15}}
\put(135,-122){\treei{\treeii{\treeiii{\treeo}{\treeo}}{\treeiii{\treeo}{\treeo}}}{\treeii{\treeo}{\treeo}}}
\put(180,-117){\makebox(0,0)[b]{16}}
\put(180,-122){\treei{\treeii{\treeiii{\treeo}{\treeo}}{\treeiii{\treeo}{\treeo}}}{\treeo}}
\put(225,-117){\makebox(0,0)[b]{17}}
\put(225,-122){\treei{\treeo}{\treeii{\treeiii{\treeo}{\treeo}}{\treeo}}}
\end{picture}
\caption{The first $18$ full binary trees in the enumeration $\phi_{r_2}$.}
\label{r-tree}
\end{center}
\end{figure}
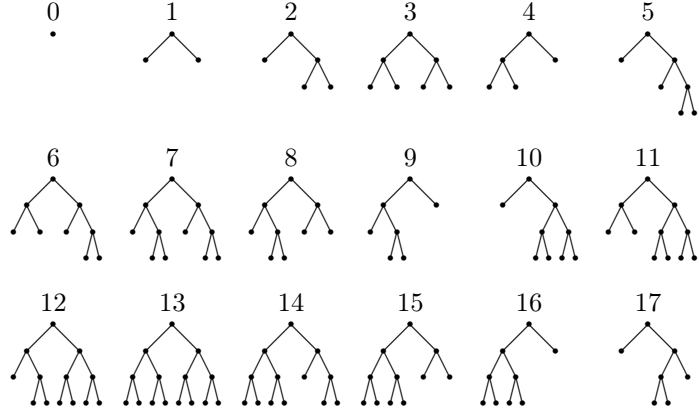

In these illustrations, one advantage of the Rosenberg-Strong pairing function over Cantor's pairing function is apparent: the heights of trees never decrease in the enumeration $\phi_{r_2}$.  In fact, we have the following theorem.

\begin{theorem}
Let $f\colon\mathbb{N}^2\to\mathbb{N}$ be any pairing function with cubic shells.  Then, the sequence
\begin{equation*}
H\bigl(\phi_f(0)\bigr),\;\;H\bigl(\phi_f(1)\bigr),\;\;H\bigl(\phi_f(2)\bigr),\;\;\ldots
\end{equation*}
is non-decreasing.
\end{theorem}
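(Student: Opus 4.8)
The plan is to prove something slightly stronger and more structural than the stated non-decreasing property: for each height $h$, the \emph{initial segment} $\phi_f(0),\phi_f(1),\ldots,\phi_f(N_h-1)$ of the enumeration consists of exactly the full binary trees of height at most $h$, where $N_h$ is the number of such trees. Since $f$ has cubic shells, Lemma~\ref{max-dominating} shows $f$ is max-dominating, so Theorem~\ref{tree-enum} applies and $\phi_f$ is an enumeration of $T$. Because a full binary tree of height at most $h$ is either $o$ or of the form $\tau(a,b)$ with $H(a)\le h-1$ and $H(b)\le h-1$, we have $N_0=1$ and $N_h=1+N_{h-1}^2$ for $h\ge 1$; in particular $N_0<N_1<N_2<\cdots$. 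The other ingredient I would record up front is the consequence of cubic shells that drives the argument: by Corollary~\ref{cubic-inequality}, for every $(x,y)\in\mathbb{N}^2$ with $m=\max(x,y)$ we have $m^2\le f(x,y)<(m+1)^2$, and hence $f(x,y)<n^2$ if and only if $\max(x,y)<n$, i.e.\ if and only if $x<n$ and $y<n$, for every $n\in\mathbb{N}$.

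The core claim is that for all $h\in\mathbb{N}$ the set $\{\phi_f(0),\phi_f(1),\ldots,\phi_f(N_h-1)\}$ equals the set of all full binary trees of height at most $h$; I would prove this by induction on $h$. The base case $h=0$ holds since $N_0=1$ and $\phi_f(0)=o$ is the only tree of height $0$. For the inductive step, assume the claim for $h$. A tree has height at most $h+1$ exactly when it is $o$ or $\tau(a,b)$ with $H(a),H(b)\le h$. For $1\le k\le N_h^2=N_{h+1}-1$ we have $\phi_f(k)=\tau(\phi_f(x),\phi_f(y))$ where $f(x,y)=k-1<N_h^2$, so by the displayed equivalence (with $n=N_h$) this holds precisely for the pairs with $x<N_h$ and $y<N_h$; by the induction hypothesis $\phi_f(x)$ and $\phi_f(y)$ then range over all trees of height at most $h$, so the trees $\phi_f(1),\ldots,\phi_f(N_{h+1}-1)$ are exactly the trees $\tau(a,b)$ with $H(a),H(b)\le h$. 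Together with $\phi_f(0)=o$, this yields the claim for $h+1$.

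Finally, since $\phi_f$ is injective, the core claim says $H(\phi_f(k))\le h$ if and only if $k<N_h$; equivalently $H(\phi_f(k))=\min\{\,h\in\mathbb{N}:k<N_h\,\}$. As the sequence $N_0<N_1<N_2<\cdots$ is strictly increasing, the set $\{\,h:k<N_h\,\}$ only shrinks as $k$ grows, so this minimum is non-decreasing in $k$, which is exactly the assertion of the theorem. The only step needing genuine care is the inductive step of the core claim, where one must match the range of indices $k$ against the shells of $f$ through the equivalence $f(x,y)<N_h^2\iff x,y<N_h$ and keep the off-by-one bookkeeping ($\phi_f(k)$ versus $f^{-1}(k-1)$, and $N_{h+1}=N_h^2+1$) straight; the rest is routine.
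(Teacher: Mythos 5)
Your proposal is correct, but it takes a genuinely different route from the paper's proof. The paper argues by strong induction on the index $i$, proving directly that $i\geq j$ implies $H\bigl(\phi_f(i)\bigr)\geq H\bigl(\phi_f(j)\bigr)$: for $j\geq1$ it writes $(x,y)=f^{-1}(j-1)$, uses the max-dominating property (via Lemma~\ref{max-dominating}) and the induction hypothesis to show $H\bigl(\phi_f(j)\bigr)=1+H\bigl(\phi_f(\max(x,y))\bigr)$, and then uses the cubic-shell implication in contrapositive form to compare $\max\bigl(f^{-1}(n)\bigr)$ with $\max\bigl(f^{-1}(j-1)\bigr)$. You instead prove the stronger structural claim that the initial segment of length $N_h$ (with $N_0=1$, $N_{h+1}=N_h^2+1$) enumerates exactly the full binary trees of height at most $h$, by induction on $h$, using Theorem~\ref{tree-enum} for bijectivity and Corollary~\ref{cubic-inequality} in the clean form $f(x,y)<n^2\iff\max(x,y)<n$; monotonicity of the heights then falls out as $H\bigl(\phi_f(k)\bigr)=\min\{h:k<N_h\}$. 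Your bookkeeping is right ($k-1<N_h^2$ precisely when $x,y<N_h$, and $N_{h+1}-1=N_h^2$), and the appeal to injectivity of $\phi_f$ in the last step is legitimate. What your route buys is a sharper conclusion: it locates exactly where the height jumps (at $k=N_h$) and shows each shell-by-shell block packs the height-bounded trees perfectly, in the spirit of Rosenberg's ``manages storage perfectly'' remark. What the paper's route buys is economy: a single strong induction on the index, no counting of trees, and no reliance on the bijectivity statement of Theorem~\ref{tree-enum} beyond the definition of $\phi_f$ itself.
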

\begin{proof}
By definition, the sequence is non-decreasing if and only if, for all $i,j\in\mathbb{N}$,
\begin{equation}\label{h-shell}
i\geq j\quad\text{implies}\quad H\bigl(\phi_f(i)\bigr)\geq H\bigl(\phi_f(j)\bigr).
\end{equation}
We prove this statement by strong induction on $i$.  For the base step, note that condition~\eqref{h-shell} is true for all $j\in\mathbb{N}$ if $i=0$.  Now consider any non-negative integer $n$ and suppose, as the induction hypothesis, that condition~\eqref{h-shell} is true for all $i,j\in\mathbb{N}$ such that $i\leq n$.  Next, consider any positive integer $j\leq n+1$, and let $(x,y)=f^{-1}(j-1)$.  By the definition of $\phi_f$,
\begin{equation*}
\phi_f(j)=\phi_f\bigl(f(x,y)+1\bigr)=\tau\bigl(\phi_f(x),\phi_f(y)\bigr).
\end{equation*}
Therefore,
\begin{equation}\label{h-max}
H\bigl(\phi_f(j)\bigr)=1+\max\bigl(H(\phi_f(x)),H(\phi_f(y))\bigr).
\end{equation}
But by Lemma~\ref{max-dominating}, $f$ is max-dominating.  So,
\begin{equation*}
x\leq f(x,y)=j-1\leq n\quad\text{and}\quad y\leq f(x,y)=j-1\leq n.
\end{equation*}
It immediately follows from the induction hypothesis that if $x\geq y$ then
%
% \linebreak[0] is introduced to reduce the size of a potentially
% overfull \hbox
%
$H\bigl(\phi_f(x)\bigr)\linebreak[0]\geq H\bigl(\phi_f(y)\bigr)$.  Similarly, if $y\geq x$ then $H\bigl(\phi_f(y)\bigr)\geq H\bigl(\phi_f(x)\bigr)$.  In either case,
\begin{equation*}
\max\bigl(H(\phi_f(x)),H(\phi_f(y))\bigr)=H\bigl(\phi_f(\max(x,y))\bigr).
\end{equation*}
Therefore, by equation~\eqref{h-max},
\begin{equation*}
H\bigl(\phi_f(j)\bigr)=1+H\bigl(\phi_f(\max(x,y))\bigr).
\end{equation*}
And since this is true for all positive integers $j\leq n+1$, it is true for $n+1$.  Hence,
\begin{equation*}
H\bigl(\phi_f(n+1)\bigr)=1+H\bigl(\phi_f(\max(u,v))\bigr),
\end{equation*}
where $(u,v)=f^{-1}(n)$.  Furthermore,
\begin{equation*}
\max(u,v)<\max(x,y)\quad\text{implies}\quad f(u,v)<f(x,y)
\end{equation*}
because $f$ has cubic shells.  But $f(u,v)=n\geq j-1=f(x,y)$, so it must be the case that $\max(u,v)\geq\max(x,y)$.  And $\max(u,v)\leq f(u,v)=n$ because $f$ is max-dominating.  So, by the induction hypothesis,
\begin{align*}
H\bigl(\phi_f(\max(u,v))\bigr)&\geq H\bigl(\phi_f(\max(x,y))\bigr),\\
1+H\bigl(\phi_f(\max(u,v))\bigr)&\geq 1+H\bigl(\phi_f(\max(x,y))\bigr),\\
H\bigl(\phi_f(n+1)\bigr)&\geq H\bigl(\phi_f(j)\bigr).
\end{align*}
This is also true if $j=0$, since $H\bigl(\phi_f(0)\bigr)=0$.  Hence, we have shown that
\begin{equation*}
n+1\geq j\quad\text{implies}\quad H\bigl(\phi_f(n+1)\bigr)\geq H\bigl(\phi_f(j)\bigr)
\end{equation*}
for all $j\in\mathbb{N}$.
\end{proof}

Another important application, closely-related to the enumeration of full binary trees, is the enumeration of finite-length sequences.
%
% Of course, as with enumerations of full binary trees, there are
% well-known enumerations of the finite-length sequences that
% \emph{are not} applications of pairing functions.  One elegant example
% is the enumeration of finite-length sequences of non-negative integers
% that is described in Theorem~1.2(c) on page 73 of Cutland1980.
%
For each positive integer $d$, the members of $\mathbb{N}^d$ are said to be the \emph{length-$d$ sequences} of non-negative integers.  Then,
\begin{equation*}
\mathbb{N}^*=\{()\}\cup\mathbb{N}^1\cup\mathbb{N}^2\cup\mathbb{N}^3\cup\cdots
\end{equation*}
is the set of all finite-length sequences of non-negative integers, where $()$ denotes the \emph{empty sequence} of length zero.
%
% One convention, consistent with standard set-theoretic
% conventions~\citep{Enderton1977} for representing ordered $d$-tuples of
% non-negative integers, is to define $()=\{2\}$.  Then, $()$ is distinct
% from all the members of $\mathbb{N}^d$, for each positive integer $d$.
%
By convention, sequences of non-negative integers that have different lengths are distinct members of $\mathbb{N}^*$.

One simple way to construct an enumeration of $\mathbb{N}^*$ is to choose a pairing function $f\colon\mathbb{N}^2\to\mathbb{N}$, and to choose a $d$-tupling function $g_d\colon\mathbb{N}^d\to\mathbb{N}$ for each positive integer $d$.  Then, the function $\zeta_{f,g}$ that is defined so that $\zeta_{f,g}(0)=()$, and so that
\begin{equation*}
\zeta_{f,g}\bigl(f(x,y)+1\bigr)=g_{y+1}^{-1}(x)
\end{equation*}
for each pair $(x,y)$ of non-negative integers, is an enumeration of $\mathbb{N}^*$.
%
% An enumeration of this form, for example, is used in Rogers1967.
%

Another way to construct an enumeration of $\mathbb{N}^*$ is described by the following theorem.  The proof of this theorem is similar to the proof of Theorem~\ref{tree-enum}.

\begin{theorem}
Let $f$ be any max-dominating pairing function for the non-negative integers.
%
% The theorem holds more generally for any pairing function
% $f\colon\mathbb{N}^2\to\mathbb{N}$ where $i\leq f(i,j)$ for all
% non-negative integers $i$ and $j$.  Max-dominating pairing functions
% necessarily satisfy this condition.
%
Let $\xi_f(0)=()$, and for each pair $(x,y)$ of non-negative integers let
\begin{equation*}
\xi_f\bigl(f(x,y)+1\bigr)=\begin{cases}
y &\text{if $x=0$}\\
\bigl(\xi_f(x),y\bigr) &\text{otherwise}\rule{0pt}{12pt}
\end{cases}.
\end{equation*}
Then, $\xi_f$ is an enumeration of $\mathbb{N}^*$.
\end{theorem}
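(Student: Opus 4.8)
The plan is to mirror the proof of Theorem~\ref{tree-enum}, replacing ``height of a tree'' by ``length of a sequence'' and keeping careful track of the empty sequence. First I would check that $\xi_f$ is well defined: since $f$ is a bijection, every positive integer $n$ has the form $f(x,y)+1$ for a unique pair $(x,y)$, and because $f$ is max-dominating we have $x\le f(x,y)=n-1<n$, so the recursion for $\xi_f(n)$ refers only to values $\xi_f(x)$ with $x<n$. Thus $\xi_f$ is determined by recursion on $\mathbb{N}$.

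Next I would prove, by strong induction on $n$, that $\xi_f(n)\in\mathbb{N}^*$ for every $n$, \emph{and simultaneously} that $\xi_f(n)=()$ only when $n=0$ --- equivalently, that $\xi_f(n)$ has length at least $1$ whenever $n\ge1$. The base case $\xi_f(0)=()$ is immediate. For the inductive step, write $n=f(x,y)$ with $x\le n$ and $y\le n$, using that $f$ is max-dominating. If $x=0$ then $\xi_f(n+1)=y$ is a length-$1$ sequence; if $x\ge1$ then, by the induction hypothesis, $\xi_f(x)$ is a sequence of length at least $1$, so $\bigl(\xi_f(x),y\bigr)$ is a sequence of length at least $2$. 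In either case $\xi_f(n+1)\in\mathbb{N}^*$ has length at least $1$.

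Then I would show that for each $s\in\mathbb{N}^*$ there is a unique $n\in\mathbb{N}$ with $\xi_f(n)=s$, by strong induction on the length $\ell$ of $s$. When $\ell=0$, so $s=()$, existence is the definition $\xi_f(0)=()$ and uniqueness follows from the auxiliary fact just established. When $\ell\ge1$, write $s=(s',y)$, where $y$ is the last entry of $s$ and $s'$ is the prefix of $s$ of length $\ell-1$ (with $s'=()$ when $\ell=1$); by the induction hypothesis there is a unique $x$ with $\xi_f(x)=s'$, and the auxiliary fact tells us that $x=0$ exactly when $\ell=1$. Setting $n=f(x,y)+1$, the two branches of the defining recursion give $\xi_f(n)=s$. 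For uniqueness, if $\xi_f(m)=s$ then $m\ne0$, so $m=f(x',y')+1$ for a unique pair $(x',y')$; comparing lengths forces $x'=0$ when $\ell=1$ and $x'\ne0$ when $\ell\ge2$, after which matching the last entry gives $y'=y$ and matching the prefix gives $\xi_f(x')=s'$, whence $x'=x$ by the induction hypothesis and $m=n$. This shows $\xi_f$ is a one-to-one correspondence from $\mathbb{N}$ to $\mathbb{N}^*$, i.e. an enumeration of $\mathbb{N}^*$.

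The one place that needs care --- the analogue, in the tree proof, of knowing that the trivial tree $o$ is the unique tree of height $0$ --- is the bookkeeping around the empty sequence: the case split $x=0$ versus $x\ne0$ in the definition of $\xi_f$ must line up exactly with whether the sequence being built has length $1$ or length at least $2$, and this is precisely what the auxiliary claim ``$\xi_f(x)=()$ if and only if $x=0$'' guarantees. Folding that claim into the first induction, rather than proving it separately, is what keeps the surjectivity-and-uniqueness argument clean.
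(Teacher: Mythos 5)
Your proof is correct and follows essentially the same route as the paper: one induction establishing that every value lies in $\mathbb{N}^*$ (with the empty sequence occurring only at $0$), followed by an induction on sequence length for existence and uniqueness. The only cosmetic difference is that you run the first induction on $n$ directly (using max-domination for well-foundedness) where the paper inducts on the first argument $x$, and you spell out the uniqueness bookkeeping that the paper leaves implicit.
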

\begin{proof}
First, we show that $\xi_f(n)\in\mathbb{N}^*$ for each $n\in\mathbb{N}$.  Since $\xi_f(0)=()\in\mathbb{N}^*$, it suffices to prove that for each pair $(x,y)$ of non-negative integers there exists a positive integer $d$ such that $\xi_f\bigl(f(x,y)+1\bigr)\in\mathbb{N}^d$.  The proof is by strong induction on $x$.  For the base step, notice that
\begin{equation*}
\xi_f\bigl(f(0,y)+1\bigr)=y\in\mathbb{N}^1
\end{equation*}
for all non-negative integers $y$.  Now consider any non-negative integer $x$ and suppose, as the induction hypothesis, that for each non-negative integer $i\leq x$, and for each non-negative integer $y$, there exists a positive integer $d$ such that $\xi_f\bigl(f(i,y)+1\bigr)\in\mathbb{N}^d$.  Since $f\colon\mathbb{N}^2\to\mathbb{N}$ is a pairing function, there exist non-negative integers $i$ and $j$ such that $f(i,j)=x$.  Then, by the definition of $\xi_f$,
\begin{equation}\label{zeta-induct}
\xi_f\bigl(f(x+1,y)+1\bigr)=\Bigl(\xi_f(x+1),y\Bigr)=\Bigl(\xi_f\bigl(f(i,j)+1\bigr),y\Bigr)
\end{equation}
for all non-negative integers $y$.  But $i\leq f(i,j)=x$ because $f$ is max-dominating.  Therefore, by the induction hypothesis, $\xi_f\bigl(f(i,j)+1\bigr)\in\mathbb{N}^d$ for some positive integer $d$.  We may conclude by equation~\eqref{zeta-induct} that $\xi_f\bigl(f(x+1,y)+1\bigr)\in\mathbb{N}^{d+1}$ for all non-negative integers $y$.

Next, we show that for each $\mathbf{u}\in\mathbb{N}^*$ there exists a unique $n\in\mathbb{N}$ such that $\xi_f(n)=\mathbf{u}$.  The proof is by induction on the length of the sequence $\mathbf{u}$.  For the base step, notice that if $\mathbf{u}=()$ then $n=0$ is the unique non-negative integer such that $\xi_f(n)=\mathbf{u}$.  And if $\mathbf{u}\in\mathbb{N}^1$ then $n=f(0,\mathbf{u})+1$ is the unique non-negative integer such that $\xi_f(n)=\mathbf{u}$.  Next, consider any positive integer $d$ and suppose, as the induction hypothesis, that for each $\mathbf{u}\in\mathbb{N}^d$ there exists a unique $n\in\mathbb{N}$ such that $\xi_f(n)=\mathbf{u}$.  Now consider any $\mathbf{u}\in\mathbb{N}^{d+1}$.  It must be the case that $\mathbf{u}=(\mathbf{v},y)$ for some unique $\mathbf{v}\in\mathbb{N}^d$ and unique $y\in\mathbb{N}$.  By the induction hypothesis, there also exists a unique $x\in\mathbb{N}$ such that $\xi_f(x)=\mathbf{v}$.  Note that $x\ne0$ because $\mathbf{v}\ne()$.  Therefore, $n=f(x,y)+1$ is the unique non-negative integer such that
\begin{equation*}
\xi_f(n)=\xi_f\bigl(f(x,y)+1\bigr)=\bigl(\xi_f(x),y\bigr)=(\mathbf{v},y)=\mathbf{u}.
\end{equation*}
We may conclude that $\xi_f$ is a one-to-one correspondence from $\mathbb{N}$ to $\mathbb{N}^*$.  That is, $\xi_f$ is an enumeration of the set $\mathbb{N}^*$.
\end{proof}

The Rosenberg-Strong pairing function was originally
%
% To avoid a potentially overfull \hbox, the following is used rather
% than \citep{Rosenberg1972,Rosenberg1974}
%
devised~\citetext{Rosenberg and Strong, \citeyear{Rosenberg1972}; \citealp{Rosenberg1974}} for applications to data storage in computer science.  One of its key features is that if the binary representations of $x$ and $y$ each have $n$ or fewer bits, then the binary representation of $r_2(x,y)$ has $2n$ or fewer bits.  This property is often useful when implementing the Rosenberg-Strong pairing function on a contemporary computer, and it is a consequence of the fact that, in Rosenberg's words, $r_2$ ``manages storage perfectly for square arrays''~\citep{Rosenberg1975b}.  In contrast, Cantor's pairing function $c(x,y)$ does not possess this property.  For example, the binary representation of $3=\bigl(11\bigr)_2$ has two bits, and the binary representation of $2=\bigl(10\bigr)_2$ also has two bits, but
\begin{equation*}
c(3,2)=18=\bigl(10010\bigr)_2
\end{equation*}
has more than 4 bits in its binary representation.

More generally, we have the following theorem.

\begin{theorem}\label{cubic-bits}
Let $f\colon\mathbb{N}^d\to\mathbb{N}$ be any $d$-tupling function with cubic shells.  If the non-negative integers $x_1$, $x_2$, \ldots, $x_d$ each have a binary representation with $n$ or fewer bits, then the binary representation of $f(x_1,x_2,\ldots,x_d)$ has $nd$ or fewer bits.
\end{theorem}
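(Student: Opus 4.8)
The plan is to reduce everything to the clean inequality furnished by Corollary~\ref{cubic-inequality} and to handle the binary-representation hypotheses purely through the bit-count formula. Recall that a non-negative integer $x$ has a binary representation with exactly $\lceil\log_2(x+1)\rceil$ bits. So the first step is to observe that ``$x$ has a binary representation with $n$ or fewer bits'' is equivalent to $\lceil\log_2(x+1)\rceil\leq n$, which in turn is equivalent to $x+1\leq 2^n$, i.e.\ to $x<2^n$. Applying this to the hypothesis, we get $x_i<2^n$ for each $i\in\{1,2,\ldots,d\}$; and applying it in the other direction to the desired conclusion, it suffices to prove that $f(x_1,x_2,\ldots,x_d)<2^{nd}$.

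Next I would set $m=\max(x_1,x_2,\ldots,x_d)$. From $x_i<2^n$ for every $i$ we obtain $m\leq 2^n-1$, hence $m+1\leq 2^n$, and therefore $(m+1)^d\leq(2^n)^d=2^{nd}$. Since $f$ has cubic shells, Corollary~\ref{cubic-inequality} gives
\begin{equation*}
m^d\leq f(x_1,x_2,\ldots,x_d)<(m+1)^d\leq 2^{nd},
\end{equation*}
so $f(x_1,x_2,\ldots,x_d)<2^{nd}$, which by the equivalence noted above means that the binary representation of $f(x_1,x_2,\ldots,x_d)$ has $nd$ or fewer bits.

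There is essentially no hard step here; the argument is a two-line chain of inequalities once the bit-count/logarithm dictionary is in place. The only point that warrants a moment's care is the degenerate case $n=0$: then the hypothesis forces $x_i=0$ for all $i$, so $m=0$, and Corollary~\ref{cubic-inequality} gives $0\leq f(0,0,\ldots,0)<1$, i.e.\ $f(0,0,\ldots,0)=0$, whose binary representation has $0=nd$ bits, consistent with the claim. I would mention this case explicitly (or simply note that the inequality $f(x_1,\ldots,x_d)<2^{nd}$ and the bit-count equivalence cover it automatically) so that the reader is not left wondering about empty binary representations.
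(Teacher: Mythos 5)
Your proposal is correct and follows essentially the same route as the paper's proof: translate the bit-count hypothesis into $x_i+1\leq 2^n$, deduce $(m+1)^d\leq 2^{nd}$ for $m=\max(x_1,\ldots,x_d)$, and apply Corollary~\ref{cubic-inequality} to bound $f(x_1,\ldots,x_d)<2^{nd}$. The extra remark about the $n=0$ case is a harmless addition not present in the paper.
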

\begin{proof}
Consider any non-negative integers $x_1$, $x_2$, \ldots, $x_d$ whose binary representations each have $n$ or fewer bits.  Then, for each positive integer $i\leq d$,
\begin{align*}
\bigl\lceil\,\log_2(x_i+1)\bigr\rceil&\leq n,\\
\log_2(x_i+1)&\leq n,\\
x_i+1&\leq2^n.
\end{align*}
Letting $m=\max(x_1,x_2,\ldots,x_d)$, it immediately follows that
\begin{align*}
m+1&\leq2^n,\\
(m+1)^d&\leq2^{nd}.
\end{align*}
Then, by Corollary~\ref{cubic-inequality},
\begin{align*}
f(x_1,x_2,\ldots,x_d)<(m+1)^d&\leq2^{nd},\\
f(x_1,x_2,\ldots,x_d)+1&\leq2^{nd},\\
\log_2\bigl(f(x_1,x_2,\ldots,x_d)+1\bigr)&\leq nd,\\
\bigl\lceil\,\log_2\bigl(f(x_1,x_2,\ldots,x_d)+1\bigr)\bigr\rceil&\leq\bigl\lceil nd\,\bigr\rceil=nd.
\end{align*}
We may conclude that the binary representation of $f(x_1,x_2,\ldots,x_d)$ has $nd$ or fewer bits.\footnote{
If we replace the base $2$ with the base $b$ in this proof, where $b$ is any integer greater than $1$, then we obtain a proof of the following result: if the non-negative integers $x_1$, $x_2$, \ldots, $x_d$ each have a base-$b$ representation with $n$ or fewer digits, then the base-$b$ representation of $f(x_1,x_2,\ldots,x_d)$ has $nd$ or fewer digits.  This holds for all $d$-tupling functions $f\colon\mathbb{N}^d\to\mathbb{N}$ that have cubic shells.
}
\end{proof}

At the time of its publication, the results in Cantor's 1878 paper were surprising and unexpected~\citep{Dauben1979,Johnson1979}.
%
% On page 69, Dauben1979 describes the results in Cantor's paper as
% being ``unusual and unexpected''.  He also describes the paper as
% containing ``unintuitive surprises''.  On page 145, Johnson1979 calls
% the results ``surprising and counterintuitive''.
%
In addition to describing a $d$-tupling function for the unit interval of the real line, Cantor's paper was the first to provide an explicit formula for a pairing function for the positive integers.  Variants of Cantor's pairing function for the positive integers still dominate the literature, with variants of the Rosenberg-Strong pairing function tending to appear in more supplemental roles (for example, see Hrbacek and Jech~\citeyearpar{Hrbacek1978}, or Smory\'{n}ski~\citeyearpar{Smorynski1991}).  But Rosenberg and Strong were motivated by practical concerns, and their pairing function has practical advantages over Cantor's pairing function.  Indeed, we have seen that an enumeration of full binary trees that is constructed from the Rosenberg-Strong pairing function orders the trees in a more intuitively convenient sequence than the corresponding enumeration that is constructed from Cantor's pairing function.  And in implementations of pairing functions on contemporary computers, where careful attention is paid to the number of bits in the binary representation of each number, the Rosenberg-Strong pairing function also provides a practical advantage.  These advantages follow directly from the fact that the Rosenberg-Strong pairing function has cubic shells.  And in this way, we see some merit in the study of pairing functions and their shell numberings.

\section{Appendix}

In Section~\ref{additional}, it is claimed that the formula for $r_d^{-1}$, given in equations~\eqref{r-inv} and~\eqref{xd}, is the inverse of the Rosenberg-Strong $d$-tupling function.  This claim can be proved in the following manner.

\begin{lemma}\label{r-cubic}
For all positive integers $d$ and all $(x_1,x_2,\ldots,x_d)\in\mathbb{N}^d$,
\begin{equation*}
m^d\leq r_d(x_1,x_2,\ldots,x_d)<(m+1)^d,
\end{equation*}
where $m=\max(x_1,x_2,\ldots,x_d)$.
\end{lemma}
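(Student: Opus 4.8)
The plan is to proceed by induction on $d$, since $r_d$ is defined recursively from $r_{d-1}$. The base case $d=1$ is immediate: $r_1(x_1)=x_1$, so $m=x_1$ and the inequality $m^1\le r_1(x_1)<(m+1)^1$ reads $x_1\le x_1<x_1+1$, which holds. For the inductive step, fix $d>1$, assume the claim for $d-1$, and consider a point $(x_1,\ldots,x_d)\in\mathbb{N}^d$ with $m=\max(x_1,\ldots,x_d)$. Let $m'=\max(x_1,\ldots,x_{d-1})$, so that $m=\max(m',x_d)$ and, by the induction hypothesis, $(m')^{d-1}\le r_{d-1}(x_1,\ldots,x_{d-1})<(m'+1)^{d-1}$.

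The heart of the argument is to substitute this into the defining recursion
\begin{equation*}
r_d(x_1,\ldots,x_d)=r_{d-1}(x_1,\ldots,x_{d-1})+m^d+(m-x_d)\bigl((m+1)^{d-1}-m^{d-1}\bigr)
\end{equation*}
and bound the right-hand side. I would split into two cases according to whether $x_d=m$ or $x_d<m$ (equivalently, whether $m'\le x_d$ or $m'=m$). In the case $x_d=m$: then $m'\le m$, the last term vanishes, and $r_d=r_{d-1}(x_1,\ldots,x_{d-1})+m^d$; the lower bound $r_{d-1}\ge 0$ gives $r_d\ge m^d$, while the upper bound $r_{d-1}<(m'+1)^{d-1}\le m^{d-1}$ (using $m'\le m-1$ when $m'<m$, and checking the boundary $m'=m$ separately against $(m+1)^{d-1}-m^{d-1}\ge m^{d-1}$... actually more carefully $r_{d-1}<(m'+1)^{d-1}\le m^{d-1}\le (m+1)^{d-1}-m^{d-1}$) gives $r_d<m^d+(m+1)^{d-1}-m^{d-1}\le(m+1)^d$, the last step being the binomial-type inequality $(m+1)^{d-1}\le(m+1)^d-m^d+m^{d-1}$ for $d\ge 2$. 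In the case $x_d<m$: then necessarily $m'=m$, so $r_{d-1}\ge m^{d-1}$ and the term $(m-x_d)\bigl((m+1)^{d-1}-m^{d-1}\bigr)\ge(m+1)^{d-1}-m^{d-1}$, whence $r_d\ge m^{d-1}+m^d+(m+1)^{d-1}-m^{d-1}=m^d+(m+1)^{d-1}-m^{d-1}\ge m^d$; for the upper bound, use $r_{d-1}<(m+1)^{d-1}$ and $m-x_d\le m$, so $r_d<(m+1)^{d-1}+m^d+m\bigl((m+1)^{d-1}-m^{d-1}\bigr)$, and one checks this telescoping expression equals exactly $(m+1)^d$ — indeed $m^d+(m+1)\bigl((m+1)^{d-1}-m^{d-1}\bigr)-m^{d-1}\cdot 0$ simplifies, and the cleanest route is to observe that when $x_d$ ranges over $\{0,1,\ldots,m-1\}$ the values $m^d+(m-x_d)\bigl((m+1)^{d-1}-m^{d-1}\bigr)+\{0,\ldots,(m+1)^{d-1}-m^{d-1}-1\}$ partition the interval $[m^d+(m+1)^{d-1}-m^{d-1},\,(m+1)^d)$, and when $x_d=m$ the values fill $[m^d,\,m^d+(m+1)^{d-1}-m^{d-1})$.

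The cleanest presentation, which I would adopt to avoid case-chasing, is to prove the sharper statement that $r_d$ restricted to the shell $\{\mathbf{x}:\max(\mathbf{x})=m\}$ is a bijection onto the integer interval $[m^d,(m+1)^d)$, by induction on $d$: the shell with $\max=m$ in $\mathbb{N}^d$ is the disjoint union, over $x_d\in\{0,\ldots,m\}$, of (a copy of) the $\max=m$ shell of $\mathbb{N}^{d-1}$ when $x_d<m$ and the whole cube $\{0,\ldots,m\}^{d-1}$ when $x_d=m$; the recursion for $r_d$ is precisely engineered so that these pieces are sent to consecutive blocks tiling $[m^d,(m+1)^d)$, using $|\{0,\ldots,m\}^{d-1}|=(m+1)^{d-1}$ and the size of each $\max=m$ shell being $(m+1)^{d-1}-m^{d-1}$. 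The main obstacle is purely bookkeeping: verifying that the offsets $m^d$ and $(m-x_d)\bigl((m+1)^{d-1}-m^{d-1}\bigr)$ make the blocks line up end-to-end with no gaps or overlaps, which reduces to the identity $m^d+(m+1)\bigl((m+1)^{d-1}-m^{d-1}\bigr)=(m+1)^d-m^{d-1}+m^{d-1}$... i.e. the telescoping $m^d+(m+1)^{d}-(m+1)m^{d-1}-\bigl((m+1)^{d-1}-m^{d-1}\bigr)\cdot 0$; in any case it is elementary algebra once the combinatorial picture is fixed.
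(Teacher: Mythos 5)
Your overall strategy is the same as the paper's: induction on $d$ via the defining recursion, with the base case $d=1$ handled identically. The difference is that the paper needs no case analysis at all: the lower bound $m^d\le r_d$ follows immediately because both $r_{d-1}(x_1,\ldots,x_{d-1})$ and $(m-x_d)\bigl((m+1)^{d-1}-m^{d-1}\bigr)$ are non-negative, and the upper bound follows uniformly from $r_{d-1}<(n+1)^{d-1}\le(m+1)^{d-1}$ together with the algebraic simplification $(m+1)^{d-1}+m^d+(m-x_d)\bigl((m+1)^{d-1}-m^{d-1}\bigr)=(m+1)^d-x_d\bigl((m+1)^{d-1}-m^{d-1}\bigr)\le(m+1)^d$ --- exactly the telescoping you notice in your $x_d<m$ case, which in fact works for all $x_d\le m$. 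Your split into $x_d=m$ versus $x_d<m$ is therefore unnecessary, and it leads you into one false assertion: in the $x_d=m$ branch you invoke the chain $r_{d-1}<(m'+1)^{d-1}\le m^{d-1}\le(m+1)^{d-1}-m^{d-1}$, whose last link says $2m^{d-1}\le(m+1)^{d-1}$, which fails already for $d=2$, $m\ge2$ (and the middle link fails when $m'=m$). The slip is harmless --- the bound you actually need, $r_d<m^d+(m+1)^{d-1}\le(m+1)^d$, follows from $r_{d-1}<(m+1)^{d-1}$ alone --- but as written that step is wrong and should be replaced by the uniform estimate. Your closing sketch, proving the sharper statement that $r_d$ maps each shell $\{\mathbf{x}:\max(\mathbf{x})=m\}$ bijectively onto $[m^d,(m+1)^d)$, is a genuinely stronger route (it is essentially what the paper's Lemmas \ref{r-right-inverse} and \ref{r-left-inverse} later establish), but you leave the block-tiling verification as ``bookkeeping,'' and it is not needed for the stated inequality.
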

\begin{proof}
The proof is by induction on $d$.  For the base step, note that for all non-negative integers $x_1$, if $m=\max(x_1)$ then
\begin{equation*}
m^1\leq m=x_1=r_1(x_1)=x_1=m<(m+1)^1.
\end{equation*}
Now consider any integer $d>1$ and suppose, as the induction hypothesis, that for all $(x_1,\ldots,x_{d-1})\in\mathbb{N}^{d-1}$,
\begin{equation*}
n^{d-1}\leq r_{d-1}(x_1,\ldots,x_{d-1})<(n+1)^{d-1},
\end{equation*}
where $n=\max(x_1,\ldots,x_{d-1})$.  Next, consider any $(x_1,\ldots,x_{d-1},x_d)\in\mathbb{N}^d$ and let
%
% \linebreak[0] is introduced to avoid a potentially overfull \hbox
%
$m=\max(x_1,\ldots,\linebreak[0]x_{d-1},x_d)$.  By the definition of $r_d$,
\begin{equation*}
r_d(x_1,\ldots,x_{d-1},x_d)=r_{d-1}(x_1,\ldots,x_{d-1})+m^d+(m-x_d)\bigl((m+1)^{d-1}-m^{d-1}\bigr).
\end{equation*}
But $r_{d-1}(x_1,\ldots,x_{d-1})$ and $(m-x_d)\bigl((m+1)^{d-1}-m^{d-1}\bigr)$ are non-negative integers, so
\begin{equation*}
r_d(x_1,\ldots,x_{d-1},x_d)\geq m^d.
\end{equation*}
And it follows from the definition of $r_d$, together with the induction hypothesis, that
\begin{align*}
r_d(x_1,\ldots,x_{d-1},x_d)&<(n+1)^{d-1}+m^d+(m-x_d)\bigl((m+1)^{d-1}-m^{d-1}\bigr),\\
r_d(x_1,\ldots,x_{d-1},x_d)&<(m+1)^{d-1}+m^d+(m-x_d)\bigl((m+1)^{d-1}-m^{d-1}\bigr),\\
r_d(x_1,\ldots,x_{d-1},x_d)&<(m+1)^{d-1}+m(m+1)^{d-1}-x_d\bigl((m+1)^{d-1}-m^{d-1}\bigr),\\
r_d(x_1,\ldots,x_{d-1},x_d)&<(m+1)^d-x_d\bigl((m+1)^{d-1}-m^{d-1}\bigr),\\
r_d(x_1,\ldots,x_{d-1},x_d)&<(m+1)^d.
\end{align*}
Hence, we have shown that
\begin{equation*}
m^d\leq r_d(x_1,\ldots,x_{d-1},x_d)<(m+1)^d.
\end{equation*}
\end{proof}

\begin{corollary}\label{r-root}
Let $d$ be a positive integer.  For all $(x_1,x_2,\ldots,x_d)\in\mathbb{N}^d$,
\begin{equation*}
\max(x_1,x_2,\ldots,x_d)=\bigl\lfloor\sqrt[d]{r_d(x_1,x_2,\ldots,x_d)}\,\bigr\rfloor.
\end{equation*}
\end{corollary}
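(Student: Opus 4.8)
The plan is to derive this corollary directly from Lemma~\ref{r-cubic}, which we have just proved. The corollary is essentially the observation that the floor of the $d$-th root of a number lying in the interval $[m^d,(m+1)^d)$ must equal $m$, combined with the identification of $m$ as the shell number of the point.

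First I would fix a positive integer $d$ and an arbitrary point $(x_1,x_2,\ldots,x_d)\in\mathbb{N}^d$, and set $m=\max(x_1,x_2,\ldots,x_d)$. By Lemma~\ref{r-cubic},
\begin{equation*}
m^d\leq r_d(x_1,x_2,\ldots,x_d)<(m+1)^d.
\end{equation*}
Taking $d$-th roots (which is a monotone increasing operation on the non-negative reals) yields
\begin{equation*}
m\leq\sqrt[d]{r_d(x_1,x_2,\ldots,x_d)}<m+1.
\end{equation*}
Since $m$ is a non-negative integer, this pair of inequalities is precisely the statement that $m=\bigl\lfloor\sqrt[d]{r_d(x_1,x_2,\ldots,x_d)}\,\bigr\rfloor$, by the defining property of the floor function. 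Substituting back the definition of $m$ gives the desired identity.

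There is essentially no obstacle here; the only point requiring a word of care is the direction of the inequality $r_d<(m+1)^d$, which is strict, so that after taking roots we get $\sqrt[d]{r_d}<m+1$ rather than $\leq$, exactly as needed to pin down the floor. This corollary is exactly the ingredient used in equation~\eqref{xd}, where $m=\bigl\lfloor\sqrt[d]{z}\,\bigr\rfloor$ is claimed to recover the shell number of $r_d^{-1}(z)$.
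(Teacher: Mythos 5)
Your proposal is correct and follows exactly the paper's own argument: apply Lemma~\ref{r-cubic} to get $m^d\leq r_d(x_1,\ldots,x_d)<(m+1)^d$, take $d$-th roots to obtain $m\leq\sqrt[d]{r_d(x_1,\ldots,x_d)}<m+1$, and conclude that the floor equals $m$. Nothing is missing.
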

\begin{remark*}
A simple generalization of the following proof shows that a $d$-tupling function $f\colon\mathbb{N}^d\to\mathbb{N}$ has cubic shells if and only if $\max(\mathbf{x})=\bigl\lfloor\sqrt[d]{f(\mathbf{x})}\,\bigr\rfloor$ for all $\mathbf{x}\in\mathbb{N}^d$.
\end{remark*}
\begin{proof}
Consider any $(x_1,x_2,\ldots,x_d)\in\mathbb{N}^d$ and let $m=\max(x_1,x_2,\ldots,x_d)$.  By Lemma~\ref{r-cubic},
\begin{equation*}
m^d\leq r_d(x_1,x_2,\ldots,x_d)<(m+1)^d.
\end{equation*}
Therefore,
\begin{equation*}
m\leq\sqrt[d]{r_d(x_1,x_2,\ldots,x_d)}<m+1,
\end{equation*}
and we may conclude that $m=\bigl\lfloor\sqrt[d]{r_d(x_1,x_2,\ldots,x_d)}\,\bigr\rfloor$.
\end{proof}

\begin{lemma}\label{xd-inequality}
Given any positive integer $d$ and any non-negative integer $z$, let $m=\bigl\lfloor\sqrt[d]{z}\,\bigr\rfloor$ and let $x_d$ be defined by equation~\eqref{xd}.  Then,
\begin{equation*}
0\leq x_d\leq m.
\end{equation*}
\end{lemma}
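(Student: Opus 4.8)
The plan is to write $x_d = m - k$, where
\begin{equation*}
k = \left\lfloor \frac{\max\bigl(0,\, z - m^d - m^{d-1}\bigr)}{(m+1)^{d-1} - m^{d-1}} \right\rfloor ,
\end{equation*}
so that the claim $0 \le x_d \le m$ becomes equivalent to $0 \le k \le m$. First I would check that $k$ is a well-defined non-negative integer: for $d \ge 2$ the denominator $D = (m+1)^{d-1} - m^{d-1}$ is a positive integer (it equals $1$ when $m = 0$, and $(m+1)^{d-1} > m^{d-1}$ when $m \ge 1$), and the numerator is non-negative by construction, so $k \ge 0$. That alone gives the lower bound $x_d \le m$. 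The case $d = 1$ is degenerate, since then $m = \lfloor \sqrt[d]{z}\,\rfloor = z$ makes the numerator equal to $\max(0,-1) = 0$, and with the convention $0/0 = 0$ one obtains $x_1 = m$.

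The substance of the lemma is the upper bound $x_d \ge 0$, i.e.\ $k \le m$. Since $D > 0$ and the numerator is a non-negative integer, $k \le m$ is equivalent to
\begin{equation*}
\max\bigl(0,\, z - m^d - m^{d-1}\bigr) < (m+1)\,D .
\end{equation*}
If $z - m^d - m^{d-1} \le 0$, the left-hand side is $0$ and the inequality holds because $(m+1)\,D > 0$. Otherwise I would expand
\begin{equation*}
(m+1)\,D = (m+1)^d - (m+1)m^{d-1} = (m+1)^d - m^d - m^{d-1} ,
\end{equation*}
so the inequality to be proved collapses to $z < (m+1)^d$. This follows at once from $m = \lfloor \sqrt[d]{z}\,\rfloor$, which gives $m \le \sqrt[d]{z} < m+1$ and hence $m^d \le z < (m+1)^d$.

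The main obstacle is this upper bound: one has to notice the cancellation $(m+1)\,D = (m+1)^d - m^d - m^{d-1}$ and then recognize that the surviving inequality $z < (m+1)^d$ is precisely the defining property of $m = \lfloor \sqrt[d]{z}\,\rfloor$. Everything else — the positivity of $D$ (and hence the well-definedness of $k$), the lower bound, and the trivial branch of the maximum — is routine.
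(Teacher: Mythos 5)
Your proof is correct and is essentially the paper's own argument run in reverse: both rest on the defining property $z<(m+1)^d$ of $m=\bigl\lfloor\sqrt[d]{z}\,\bigr\rfloor$ together with the cancellation $(m+1)\bigl((m+1)^{d-1}-m^{d-1}\bigr)=(m+1)^d-m^d-m^{d-1}$, which bounds the floor term between $0$ and $m$. Your extra remarks on the positivity of the denominator and the degenerate $d=1$ case are careful additions (the paper's chain tacitly assumes $d\geq 2$, where the formula is actually used), but they do not change the substance of the argument.
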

\begin{proof}
Given any positive integer $d$ and any non-negative integer $z$, let $m=\bigl\lfloor\sqrt[d]{z}\,\bigr\rfloor$.  Then,
\begin{align*}
\sqrt[d]{z}&<m+1,\\
z&<(m+1)^d,\\
z-(m+1)m^{d-1}&<(m+1)^d-(m+1)m^{d-1},\\
z-m^d-m^{d-1}&<\bigl(m+1\bigr)\bigl((m+1)^{d-1}-m^{d-1}\bigr),\\
0\leq\max\bigl(0,z-m^d-m^{d-1}\bigr)&<\bigl(m+1\bigr)\bigl((m+1)^{d-1}-m^{d-1}\bigr).
\end{align*}
And dividing by $(m+1)^{d-1}-m^{d-1}$,
\begin{alignat*}{3}
0&\leq{}&&\frac{\max\bigl(0,z-m^d-m^{d-1}\bigr)}{(m+1)^{d-1}-m^{d-1\rule{0pt}{6pt}}}&&<m+1,\\
0&\leq{}&\Biggl\lfloor&\frac{\max\bigl(0,z-m^d-m^{d-1}\bigr)}{(m+1)^{d-1}-m^{d-1\rule{0pt}{6pt}}}\Biggr\rfloor&&<m+1,\\
0&\leq{}&\Biggl\lfloor&\frac{\max\bigl(0,z-m^d-m^{d-1}\bigr)}{(m+1)^{d-1}-m^{d-1\rule{0pt}{6pt}}}\Biggr\rfloor&&\leq m,\\
0&\geq{}&-\Biggl\lfloor&\frac{\max\bigl(0,z-m^d-m^{d-1}\bigr)}{(m+1)^{d-1}-m^{d-1\rule{0pt}{6pt}}}\Biggr\rfloor&&\geq-m,\\
m&\geq{}&m-\Biggl\lfloor&\frac{\max\bigl(0,z-m^d-m^{d-1}\bigr)}{(m+1)^{d-1}-m^{d-1\rule{0pt}{6pt}}}\Biggr\rfloor&&\geq0.
\end{alignat*}
We may conclude from equation~\eqref{xd} that
\begin{equation*}
m\geq x_d\geq0.
\end{equation*}
\end{proof}

\begin{lemma}\label{r-right-inverse}
Let $d$ be any positive integer.  The formula for $r_d^{-1}(z)$ that is given in equations~\eqref{r-inv} and~\eqref{xd} describes a function from $\mathbb{N}$ to $\mathbb{N}^d$.  Moreover,
\begin{equation*}
r_d\bigl(r_d^{-1}(z)\bigr)=z
\end{equation*}
for all $z\in\mathbb{N}$.
\end{lemma}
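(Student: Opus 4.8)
The plan is to argue by induction on $d$, exploiting the parallel recursive structure of $r_d$ and of the candidate formula for $r_d^{-1}$. For the base case $d=1$, the formula gives $r_1^{-1}(z)=z$, which lies in $\mathbb{N}=\mathbb{N}^1$, and $r_1\bigl(r_1^{-1}(z)\bigr)=r_1(z)=z$, so both assertions hold. For the inductive step, fix an integer $d>1$ and assume the lemma for $d-1$; in particular, $r_{d-1}^{-1}$ is a genuine function from $\mathbb{N}$ to $\mathbb{N}^{d-1}$ and $r_{d-1}\bigl(r_{d-1}^{-1}(w)\bigr)=w$ for every $w\in\mathbb{N}$. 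Write $m=\lfloor\sqrt[d]{z}\,\rfloor$, let $x_d$ be given by equation~\eqref{xd}, and put
\[
w=z-m^d-(m-x_d)\bigl((m+1)^{d-1}-m^{d-1}\bigr),
\]
so that equation~\eqref{r-inv} reads $r_d^{-1}(z)=\bigl(r_{d-1}^{-1}(w),x_d\bigr)$.

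The first task is to show this expression is well-defined, i.e.\ that $x_d$ and $w$ are both non-negative integers. That $0\le x_d\le m$ is exactly Lemma~\ref{xd-inequality}. That $w$ is an integer is clear, so it remains to check $w\ge 0$, and here I would split on the sign of $z-m^d-m^{d-1}$. If $z-m^d-m^{d-1}\le 0$, then the $\max$ in equation~\eqref{xd} vanishes, forcing $x_d=m$ and $w=z-m^d\ge 0$ (since $m^d\le z$). If $z-m^d-m^{d-1}>0$, then the floor in equation~\eqref{xd} is at most $(z-m^d-m^{d-1})/\bigl((m+1)^{d-1}-m^{d-1}\bigr)$, which rearranges to $(m-x_d)\bigl((m+1)^{d-1}-m^{d-1}\bigr)\le z-m^d-m^{d-1}$, whence $w\ge m^{d-1}\ge 0$. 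Thus $r_d^{-1}(z)\in\mathbb{N}^{d-1}\times\mathbb{N}=\mathbb{N}^d$.

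It then remains to verify $r_d\bigl(r_d^{-1}(z)\bigr)=z$. Write $(y_1,\ldots,y_{d-1})=r_{d-1}^{-1}(w)$ and $M=\max(y_1,\ldots,y_{d-1},x_d)$. Expanding the definition of $r_d$ and using the induction hypothesis $r_{d-1}(y_1,\ldots,y_{d-1})=w$ gives
\[
r_d\bigl(r_d^{-1}(z)\bigr)=w+M^d+(M-x_d)\bigl((M+1)^{d-1}-M^{d-1}\bigr),
\]
which equals $z$ as soon as $M=m$, by comparison with the formula for $w$. Proving $M=m$ is the crux. Since $x_d\le m$, it suffices to show that $\max(y_1,\ldots,y_{d-1})\le m$ always, with equality whenever $x_d<m$. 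By Lemma~\ref{r-cubic} applied to $r_{d-1}$ (equivalently Corollary~\ref{r-root}), $\max(y_1,\ldots,y_{d-1})=\lfloor\sqrt[d-1]{w}\,\rfloor$, so everything reduces to sandwiching $w$: in the case $z-m^d-m^{d-1}>0$ one shows $m^{d-1}\le w<(m+1)^{d-1}$, the lower bound as above and the upper bound from the remaining inequality defining the floor in equation~\eqref{xd} together with $z<(m+1)^d$; in the case $z-m^d-m^{d-1}\le 0$ one has $x_d=m$ and $0\le w=z-m^d\le m^{d-1}$, so $\max(y_1,\ldots,y_{d-1})\le m=x_d$. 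Either way $M=m$, and substituting back gives $r_d\bigl(r_d^{-1}(z)\bigr)=z$.

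I expect the main obstacle to be precisely this identification $M=\max(y_1,\ldots,y_{d-1},x_d)=\lfloor\sqrt[d]{z}\,\rfloor$: it is what makes the recursion close up, and it rests on the sharp two-sided estimate $m^{d-1}\le w<(m+1)^{d-1}$ for the residual value $w$ that is fed into $r_{d-1}^{-1}$. The rest of the written-out proof should be routine arithmetic with the definitions~\eqref{r-inv} and~\eqref{xd}, organized around the same split on the sign of $z-m^d-m^{d-1}$ already used for well-definedness, after which the conclusion is a one-line cancellation against the formula for $w$.
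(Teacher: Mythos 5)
Your proposal is correct and follows essentially the same route as the paper's proof: induction on $d$, well-definedness via Lemma~\ref{xd-inequality}, a case split on the sign of $z-m^d-m^{d-1}$, the identification $\max\bigl(r_d^{-1}(z)\bigr)=m$ via Corollary~\ref{r-root} together with the induction hypothesis, and the final cancellation against the definition of $w$. Your two-sided sandwich $m^{d-1}\le w<(m+1)^{d-1}$ is exactly the content of the paper's appeal to the Euclidean division theorem, so the arguments coincide.
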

\begin{proof}
The proof is by induction on $d$.  For the base step, note that $r_1^{-1}(z)=z$ is a function from $\mathbb{N}$ to $\mathbb{N}$, and
\begin{equation*}
r_1\bigl(r_1^{-1}(z)\bigr)=r_1(z)=z
\end{equation*}
for all $z\in\mathbb{N}$.  Next, consider any integer $d>1$ and suppose, as the induction hypothesis, that $r_{d-1}^{-1}$ is a function from $\mathbb{N}$ to $\mathbb{N}^{d-1}$ and
\begin{equation*}
r_{d-1}\bigl(r_{d-1}^{-1}(z)\bigr)=z
\end{equation*}
for all $z\in\mathbb{N}$.  Now consider any non-negative integer $z$ and let $m=\bigl\lfloor\sqrt[d]{z}\,\bigr\rfloor$.  By equation~\eqref{r-inv},
\begin{equation}\label{r-inv-u}
r_d^{-1}(z)=\bigl(r_{d-1}^{-1}(u),x_d\bigr),
\end{equation}
where $x_d$ is defined by equation~\eqref{xd}, and where
\begin{equation*}
u=z-m^d-(m-x_d)\bigl((m+1)^{d-1}-m^{d-1}\bigr).
\end{equation*}
Note that if $u\in\mathbb{N}$, then it follows from the induction hypothesis that $r_{d-1}^{-1}(u)$ is a member of $\mathbb{N}^{d-1}$.  Consequently, by equation~\eqref{r-inv-u} and Lemma~\ref{xd-inequality}, $r_d^{-1}(z)$ is a member of $\mathbb{N}^d$ if $u\in\mathbb{N}$.  And by Corollary~\ref{r-root},
%
% Alternatively, instead of relying on Corollary~\ref{r-root}, we could
% include the following condition in the induction hypothesis:
% $\max\bigl(r_{d-1}^{-1}(z)\bigr)=\Bigl\lfloor z^{1/(d-1)}\Bigr\rfloor$
% for all non-negative integers $z$.
%
\begin{equation*}
u\in\mathbb{N}\quad\text{implies}\quad\max\bigl(r_{d-1}^{-1}(u)\bigr)=\biggl\lfloor\Bigl(r_{d-1}\bigl(r_{d-1}^{-1}(u)\bigr)\Bigr)^{1/(d-1)}\biggr\rfloor.
\end{equation*}
Hence, by the induction hypothesis,
\begin{equation}\label{r-inv-root}
u\in\mathbb{N}\quad\text{implies}\quad\max\bigl(r_{d-1}^{-1}(u)\bigr)=\Bigl\lfloor u^{1/(d-1)}\Bigr\rfloor.
\end{equation}
There are now two cases to consider.
\begin{description}
\item[Case 1] If $z-m^d-m^{d-1}<0$, then $x_d=m$ by equation~\eqref{xd}.  Therefore, $u=z-m^d$.  But $m=\bigl\lfloor\sqrt[d]{z}\,\bigr\rfloor$, so
\begin{align*}
m&\leq\sqrt[d]{z},\\
m^d&\leq z.
\end{align*}
It immediately follows that $u=z-m^d$ is a non-negative integer.  And by condition~\eqref{r-inv-root},
\begin{equation*}
\max\bigl(r_{d-1}^{-1}(u)\bigr)=\Bigl\lfloor\bigl(z-m^d\,\bigr)^{1/(d-1)}\Bigr\rfloor.
\end{equation*}
But because $z-m^d-m^{d-1}<0$,
\begin{align*}
z-m^d&<m^{d-1},\\
\bigl(z-m^d\,\bigr)^{1/(d-1)}&<m,\\
\Bigl\lfloor\bigl(z-m^d\,\bigr)^{1/(d-1)}\Bigr\rfloor&<m.
\end{align*}
Hence, $\max\bigl(r_{d-1}^{-1}(u)\bigr)<m$.  And since $x_d=m$, it follows from equation~\eqref{r-inv-u} that
%
% \linebreak[0] is introduced to avoid a potentially overfull \hbox
%
$\max\bigl(r_d^{-1}(z)\bigr)\linebreak[0]=m$.
\item[Case 2] If $z-m^d-m^{d-1}\geq 0$ then by equation~\eqref{xd},
\begin{equation*}
x_d=m-\Biggl\lfloor\frac{z-m^d-m^{d-1}}{(m+1)^{d-1}-m^{d-1\rule{0pt}{6pt}}}\Biggr\rfloor.
\end{equation*}
But by the Euclidean division theorem,
\begin{equation*}
z-m^d-m^{d-1}=\Biggl\lfloor\frac{z-m^d-m^{d-1}}{(m+1)^{d-1}-m^{d-1\rule{0pt}{6pt}}}\Biggr\rfloor\bigl((m+1)^{d-1}-m^{d-1}\bigr)+b
\end{equation*}
for some integer $b$ such that $0\leq b<(m+1)^{d-1}-m^{d-1}$.  Thus,
\begin{equation*}
z-m^d-m^{d-1}=(m-x_d)\bigl((m+1)^{d-1}-m^{d-1}\bigr)+b.
\end{equation*}
It immediately follows that
\begin{equation*}
z-m^d-(m-x_d)\bigl((m+1)^{d-1}-m^{d-1}\bigr)=b+m^{d-1}.
\end{equation*}
Therefore, $u=b+m^{d-1}$.  And by condition~\eqref{r-inv-root},
\begin{equation*}
\max\bigl(r_{d-1}^{-1}(u)\bigr)=\Bigl\lfloor\bigl(b+m^{d-1}\bigr)^{1/(d-1)}\Bigr\rfloor.
\end{equation*}
Moreover, since
\begin{equation*}
0\leq b<(m+1)^{d-1}-m^{d-1},
\end{equation*}
we have that
\begin{alignat*}{5}
&m^{d-1}&&\leq{}&b&+m^{d-1}&&<{}&(&m+1)^{d-1},\\
&m&&\leq{}&\bigl(b+{}&m^{d-1}\bigr)^{1/(d-1)}&&<{}&&m+1.
\end{alignat*}
Hence,
\begin{equation*}
\max\bigl(r_{d-1}^{-1}(u)\bigr)=\Bigl\lfloor\bigl(b+m^{d-1}\bigr)^{1/(d-1)}\Bigr\rfloor=m.
\end{equation*}
It then follows from Lemma~\ref{xd-inequality} and equation~\eqref{r-inv-u} that $\max\bigl(r_d^{-1}(z)\bigr)=m$.
\end{description}
In either case, $u\in\mathbb{N}$ and $\max\bigl(r_d^{-1}(z)\bigr)=m$.  So, by equation~\eqref{r-inv-u} and the definition of $r_d$,
\begin{align*}
r_d\bigl(r_d^{-1}(z)\bigr)&=r_d\bigl(r_{d-1}^{-1}(u),x_d\bigr)\\
&=r_{d-1}\bigl(r_{d-1}^{-1}(u)\bigr)+m^d+(m-x_d)\bigl((m+1)^{d-1}-m^{d-1}\bigr).
\end{align*}
Then, by the induction hypothesis,
\begin{equation*}
r_d\bigl(r_d^{-1}(z)\bigr)=u+m^d+(m-x_d)\bigl((m+1)^{d-1}-m^{d-1}\bigr).
\end{equation*}
And by the definition of $u$,
\begin{align*}
r_d\bigl(r_d^{-1}(z)\bigr)={}&z-m^d-(m-x_d)\bigl((m+1)^{d-1}-m^{d-1}\bigr)\\
&+m^d+(m-x_d)\bigl((m+1)^{d-1}-m^{d-1}\bigr).
\end{align*}
It immediately follows that
\begin{equation*}
r_d\bigl(r_d^{-1}(z)\bigr)=z.
\end{equation*}
Moreover, $u\in\mathbb{N}$ for all non-negative integers $z$.  Therefore, $r_d^{-1}(z)\in\mathbb{N}^d$ for all non-negative integers $z$.  We may conclude that $r_d^{-1}$ is a function from $\mathbb{N}$ to $\mathbb{N}^d$.
\end{proof}

\begin{lemma}\label{r-left-inverse}
Let $d$ be a positive integer.  For all $(x_1,x_2,\ldots,x_d)\in\mathbb{N}^d$,
\begin{equation*}
r_d^{-1}\bigl(r_d(x_1,x_2,\ldots,x_d)\bigr)=(x_1,x_2,\ldots,x_d).
\end{equation*}
\end{lemma}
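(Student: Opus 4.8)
The plan is to induct on $d$, mirroring the recursive structure of the formula \eqref{r-inv}--\eqref{xd} for $r_d^{-1}$, and to use Lemma~\ref{r-cubic} (equivalently Corollary~\ref{r-root}) to pin down the shell number. The base case $d=1$ is immediate, since $r_1^{-1}\bigl(r_1(x_1)\bigr)=r_1^{-1}(x_1)=x_1$. For the inductive step, assume the claim holds for $d-1$, fix $(x_1,\dots,x_d)\in\mathbb{N}^d$, and put $m=\max(x_1,\dots,x_d)$ and $z=r_d(x_1,\dots,x_d)$. By Corollary~\ref{r-root}, $\bigl\lfloor\sqrt[d]{z}\,\bigr\rfloor=m$, so the quantity called $m$ in the formula for $r_d^{-1}(z)$ agrees with our $m$. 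Writing $c=(m+1)^{d-1}-m^{d-1}$, the defining recursion for $r_d$ gives $z-m^d=r_{d-1}(x_1,\dots,x_{d-1})+(m-x_d)c$.

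The heart of the proof is to verify that the value of $x_d$ produced by equation~\eqref{xd} for this particular $z$ equals our $x_d$, and that the first argument of $r_{d-1}^{-1}$ in \eqref{r-inv} then becomes exactly $r_{d-1}(x_1,\dots,x_{d-1})$. This requires a two-case split analogous to the one in the proof of Lemma~\ref{r-right-inverse}. If $x_d=m$, then $z-m^d-m^{d-1}=r_{d-1}(x_1,\dots,x_{d-1})-m^{d-1}$, which is strictly less than $c$ because $\max(x_1,\dots,x_{d-1})\le m$ and hence $r_{d-1}(x_1,\dots,x_{d-1})<(m+1)^{d-1}$ by Lemma~\ref{r-cubic}; thus the floor in \eqref{xd} is $0$ and the recovered $x_d$ is $m$. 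If $x_d<m$, then $\max(x_1,\dots,x_{d-1})=m$, so Lemma~\ref{r-cubic} gives $0\le r_{d-1}(x_1,\dots,x_{d-1})-m^{d-1}<c$; substituting this into $z-m^d-m^{d-1}=\bigl(r_{d-1}(x_1,\dots,x_{d-1})-m^{d-1}\bigr)+(m-x_d)c$ exhibits that number as non-negative with quotient $m-x_d$ and remainder in $[0,c)$ upon division by $c$, so by uniqueness of Euclidean division the floor in \eqref{xd} equals $m-x_d$ and the recovered value is again $x_d$. In both cases the recovered $x_d$ is correct, and plugging it back shows the argument of $r_{d-1}^{-1}$ in \eqref{r-inv} equals $z-m^d-(m-x_d)c=r_{d-1}(x_1,\dots,x_{d-1})$.

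Once these two facts are in hand, the induction hypothesis gives $r_{d-1}^{-1}\bigl(r_{d-1}(x_1,\dots,x_{d-1})\bigr)=(x_1,\dots,x_{d-1})$, and therefore \eqref{r-inv} yields $r_d^{-1}(z)=\bigl((x_1,\dots,x_{d-1}),x_d\bigr)=(x_1,\dots,x_d)$, completing the step. I expect the case analysis of the previous paragraph to be the only real obstacle: one must check that the ``$\max(0,\cdot)$'' in \eqref{xd} never interferes (it does not, since in the second case the relevant quantity is already non-negative) and that the divisor $c$ is positive (which holds for all $d\ge 2$ and all $m\ge 0$). A slightly more abstract route is to note that Lemma~\ref{r-right-inverse} already makes $r_d$ surjective, so it would suffice to prove $r_d$ injective; but establishing injectivity requires the very same shell-number computation and case split, so there is no genuine shortcut.
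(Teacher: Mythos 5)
Your proposal is correct and follows essentially the same route as the paper: induction on $d$, using Corollary~\ref{r-root} to identify $m=\bigl\lfloor\sqrt[d]{z}\,\bigr\rfloor$ and Lemma~\ref{r-cubic} in a two-case analysis to show that equation~\eqref{xd} recovers $x_d$, after which the induction hypothesis applied to $z-m^d-(m-x_d)\bigl((m+1)^{d-1}-m^{d-1}\bigr)=r_{d-1}(x_1,\ldots,x_{d-1})$ finishes the step. The only differences are cosmetic: you split on $x_d=m$ versus $x_d<m$ (letting the $\max(0,\cdot)$ absorb a possibly negative numerator) where the paper splits on $x_d>\max(x_1,\ldots,x_{d-1})$ versus $m=\max(x_1,\ldots,x_{d-1})$, and you invoke uniqueness of Euclidean division where the paper derives the same sandwich inequality explicitly.
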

\begin{proof}
The proof is by induction on $d$.  For the base step, note that
\begin{equation*}
r_1^{-1}\bigl(r_1(x_1)\bigr)=r_1^{-1}(x_1)=x_1
\end{equation*}
for all $x_1\in\mathbb{N}$.  Next, consider any integer $d>1$ and suppose, as the induction hypothesis, that
\begin{equation*}
r_{d-1}^{-1}\bigl(r_{d-1}(x_1,\ldots,x_{d-1})\bigr)=(x_1,\ldots,x_{d-1})
\end{equation*}
for all $(x_1,\ldots,x_{d-1})\in\mathbb{N}^{d-1}$.  Now consider any $(x_1,\ldots,x_{d-1},x_d)\in\mathbb{N}^d$.  Let
%
% \linebreak[0] is introduced to avoid a potentially overfull \hbox
%
$z=r_d(x_1,\ldots,x_{d-1},\linebreak[0]x_d)$ and let $m=\max(x_1,\ldots,x_{d-1},x_d)$.  By the definition of $r_d$,
\begin{equation*}
z=r_{d-1}(x_1,\ldots,x_{d-1})+m^d+(m-x_d)\bigl((m+1)^{d-1}-m^{d-1}\bigr).
\end{equation*}
Hence,
\begin{equation}\label{r-z}
z-m^d-(m-x_d)\bigl((m+1)^{d-1}-m^{d-1}\bigr)=r_{d-1}(x_1,\ldots,x_{d-1}).
\end{equation}
There are two cases to consider.
\begin{description}
\item[Case 1] If $m=x_d>\max(x_1,\ldots,x_{d-1})$ then it follows from equation~\eqref{r-z} that
\begin{equation*}
z-m^d=r_{d-1}(x_1,\ldots,x_{d-1}).
\end{equation*}
Now let $n=\max(x_1,\ldots,x_{d-1})$.  By Lemma~\ref{r-cubic},
\begin{equation*}
z-m^d<(n+1)^{d-1}.
\end{equation*}
But $m>n=\max(x_1,\ldots,x_{d-1})$, so $m\geq n+1$.  Therefore,
\begin{align*}
z-m^d&<m^{d-1},\\
z-m^d-m^{d-1}&<0.
\end{align*}
It immediately follows that
\begin{equation*}
x_d=m=m-\Biggl\lfloor\frac{\max\bigl(0,z-m^d-m^{d-1}\bigr)}{(m+1)^{d-1}-m^{d-1\rule{0pt}{6pt}}}\Biggr\rfloor.
\end{equation*}
\item[Case 2] If $m=\max(x_1,\ldots,x_{d-1})$ then it follows from Lemma~\ref{r-cubic} and equation~\eqref{r-z} that
\begin{equation*}
m^{d-1}\leq z-m^d-(m-x_d)\bigl((m+1)^{d-1}-m^{d-1}\bigr).
\end{equation*}
Therefore,
\begin{equation*}
-z+m^d+m^{d-1}\leq-(m-x_d)\bigl((m+1)^{d-1}-m^{d-1}\bigr)
\end{equation*}
and
\begin{equation}\label{xd-low}
\frac{z-m^d-m^{d-1}}{(m+1)^{d-1}-m^{d-1\rule{0pt}{6pt}}}\geq m-x_d.
\end{equation}
It also follows from Lemma~\ref{r-cubic} and equation~\eqref{r-z} that
\begin{equation*}
z-m^d-(m-x_d)\bigl((m+1)^{d-1}-m^{d-1}\bigr)<(m+1)^{d-1}.
\end{equation*}
Hence,
\begin{equation*}
-(m-x_d)\bigl((m+1)^{d-1}-m^{d-1}\bigr)<-z+m^d+(m+1)^{d-1}
\end{equation*}
and
\begin{align}
m-x_d&>\frac{z-m^d-(m+1)^{d-1}}{(m+1)^{d-1}-m^{d-1\rule{0pt}{6pt}}},\notag\\
m-x_d&>\frac{z-m^d-m^{d-1}-\bigl((m+1)^{d-1}-m^{d-1}\bigr)}{(m+1)^{d-1}-m^{d-1\rule{0pt}{6pt}}},\notag\\
m-x_d&>\frac{z-m^d-m^{d-1}}{(m+1)^{d-1}-m^{d-1\rule{0pt}{6pt}}}-1.\label{xd-high}
\end{align}
Combining inequalities~\eqref{xd-low} and~\eqref{xd-high},
\begin{equation*}
\frac{z-m^d-m^{d-1}}{(m+1)^{d-1}-m^{d-1\rule{0pt}{6pt}}}\geq m-x_d>\frac{z-m^d-m^{d-1}}{(m+1)^{d-1}-m^{d-1\rule{0pt}{6pt}}}-1.
\end{equation*}
It immediately follows that
\begin{equation*}
m-x_d=\Biggl\lfloor\frac{z-m^d-m^{d-1}}{(m+1)^{d-1}-m^{d-1\rule{0pt}{6pt}}}\Biggr\rfloor.
\end{equation*}
But $m-x_d$ cannot be negative because $m=\max(x_1,\ldots,x_{d-1},x_d)$.  Therefore,
\begin{equation*}
m-x_d=\Biggl\lfloor\frac{\max\bigl(0,z-m^d-m^{d-1}\bigr)}{(m+1)^{d-1}-m^{d-1\rule{0pt}{6pt}}}\Biggr\rfloor.
\end{equation*}
\end{description}
In either case,
\begin{equation*}
x_d=m-\Biggl\lfloor\frac{\max\bigl(0,z-m^d-m^{d-1}\bigr)}{(m+1)^{d-1}-m^{d-1\rule{0pt}{6pt}}}\Biggr\rfloor.
\end{equation*}
But $m=\bigl\lfloor\sqrt[d]{z}\,\bigr\rfloor$ by Corollary~\ref{r-root}.  Hence, by equation~\eqref{r-inv}, equation~\eqref{r-z}, and the induction hypothesis,
\begin{multline*}
r_d^{-1}\bigl(r_d(x_1,\ldots,x_{d-1},x_d)\bigr)=r_d^{-1}(z)\\
\begin{aligned}
&=\Bigl(r_{d-1}^{-1}\bigl(z-m^d-(m-x_d)((m+1)^{d-1}-m^{d-1})\bigr),x_d\Bigr)\\
&=\Bigl(r_{d-1}^{-1}\bigl(r_{d-1}(x_1,\ldots,x_{d-1})\bigr),x_d\Bigr)\\
&=(x_1,\ldots,x_{d-1},x_d).
\end{aligned}
\end{multline*}
\end{proof}

\begin{theorem}
For each positive integer $d$, $r_d\colon\mathbb{N}^d\to\mathbb{N}$ is a $d$-tupling function that has $r_d^{-1}\colon\mathbb{N}\to\mathbb{N}^d$ as its inverse.
\end{theorem}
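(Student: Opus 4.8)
The plan is to assemble the theorem from the lemmas already established in this appendix, using only the elementary fact that a function possessing a two-sided inverse is a bijection whose inverse is that two-sided inverse.

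First I would confirm that $r_d$ really is a function from $\mathbb{N}^d$ to $\mathbb{N}$. Totality is immediate: the recursive definition assigns a value to every tuple $(x_1,\ldots,x_d)\in\mathbb{N}^d$. Each term in that definition --- namely $r_{d-1}(x_1,\ldots,x_{d-1})$, $m^d$, and $(m-x_d)\bigl((m+1)^{d-1}-m^{d-1}\bigr)$ with $m=\max(x_1,\ldots,x_d)$ --- is an integer, so $r_d(x_1,\ldots,x_d)$ is an integer; and Lemma~\ref{r-cubic} gives $r_d(x_1,\ldots,x_d)\geq m^d\geq 0$, so $r_d$ indeed maps into $\mathbb{N}$.

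Next I would invoke Lemma~\ref{r-right-inverse}, which provides exactly two things: that the formula in equations~\eqref{r-inv} and~\eqref{xd} genuinely defines a function $r_d^{-1}\colon\mathbb{N}\to\mathbb{N}^d$, and that this function is a right inverse of $r_d$, i.e. $r_d\bigl(r_d^{-1}(z)\bigr)=z$ for every $z\in\mathbb{N}$. I would then invoke Lemma~\ref{r-left-inverse}, which supplies the complementary identity $r_d^{-1}\bigl(r_d(x_1,\ldots,x_d)\bigr)=(x_1,\ldots,x_d)$ for every $(x_1,\ldots,x_d)\in\mathbb{N}^d$; that is, $r_d^{-1}$ is also a left inverse of $r_d$.

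From the two composition identities, $r_d$ is injective (if $r_d(\mathbf{x})=r_d(\mathbf{y})$ then $\mathbf{x}=r_d^{-1}(r_d(\mathbf{x}))=r_d^{-1}(r_d(\mathbf{y}))=\mathbf{y}$) and surjective (each $z\in\mathbb{N}$ equals $r_d(r_d^{-1}(z))$), hence a one-to-one correspondence from $\mathbb{N}^d$ to $\mathbb{N}$ --- which is precisely what it means for $r_d$ to be a $d$-tupling function --- and the function $r_d^{-1}$ of equations~\eqref{r-inv}--\eqref{xd}, being a two-sided inverse, is its inverse. I do not anticipate any genuine obstacle: all the substantive work --- the cubic-shell bounds of Lemma~\ref{r-cubic}, the range estimate of Lemma~\ref{xd-inequality}, and the two case analyses carried out in Lemmas~\ref{r-right-inverse} and~\ref{r-left-inverse} --- has already been done, leaving only the formal observation that a map with a two-sided inverse is a bijection. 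The single point that warrants a moment's attention is the verification that $r_d$ takes values in $\mathbb{N}$ rather than merely in $\mathbb{Z}$, and that is settled by the lower bound in Lemma~\ref{r-cubic}.
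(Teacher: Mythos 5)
Your proposal is correct and follows essentially the same route as the paper: cite Lemma~\ref{r-right-inverse} and Lemma~\ref{r-left-inverse} to get a two-sided inverse, then conclude that $r_d$ is a bijection with inverse $r_d^{-1}$. The extra remark that $r_d$ takes values in $\mathbb{N}$ (via the lower bound in Lemma~\ref{r-cubic}) is a harmless addition that the paper leaves implicit.
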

\begin{proof}
By Lemma~\ref{r-right-inverse}, $r_d^{-1}$ is a right inverse of $r_d$.  And by Lemma~\ref{r-left-inverse}, it is a left inverse of $r_d$.  Therefore, $r_d^{-1}$ is the inverse of $r_d$.  And since invertible functions are one-to-one correspondences, $r_d$ is a $d$-tupling function for the non-negative integers.
\end{proof}

\section{Acknowledgements}

We thank John P. D'Angelo, Douglas B. West, and Alexander Yong for the information that they provided regarding Macaulay representations.  We thank the user \texttt{nawfal} at
\begin{center}
\texttt{http://stackoverflow.com/questions/919612\#answer-13871379}
\end{center}
for introducing us to a special case of Theorem~\ref{cubic-bits}.  We also thank Dana S. Scott, Fritz H. Obermeyer, and Machi Zawidzki for their advice.

\bibliographystyle{abbrvnat}
\bibliography{RosenbergStrong}

\end{document}